\newtheorem{theorem}{\indent Theorem}
\newtheorem{lemma}[theorem]{\indent Lemma}
\newtheorem{corollary}[theorem]{\indent Corollary}
\newtheorem{proposition}[theorem]{\indent Proposition}
\newtheorem{definition}[theorem]{\indent Definition}
\newtheorem{remark}[theorem]{\indent Remark}
\begin{document}

\title{Fault-tolerant Coherent $H^\infty$ Control for Linear Quantum Systems}

\author{Yanan~Liu,
        Daoyi~Dong,
        Ian R.~Petersen,
        Qing~Gao,
        \\ Steven X.~Ding,
        Shota~Yokoyama,
        and~Hidehiro~Yonezawa
\thanks{This work was supported by the Australian Research Council's Discovery Projects Funding Scheme under Project DP190101566 and Project DP180101805, the Air Force Office of Scientific Research under Agreement FA2386-16-1-4065, the Centres of Excellence under Grant CE170100012, the Alexander von Humboldt Foundation of Germany, and the U. S. Office of Naval Research Global under Grant N62909-19-1-2129.}
\thanks{Y. Liu, S. Yokoyama and H. Yonezawa are with the School of Engineering and Information Technology, University of New South Wales, Canberra, ACT 2600, Australia, and with Center for Quantum Computation and Communication Technology, Australian Research Council, Canberra, ACT 2600, Australia
        {(\tt\small yaananliu@gmail.com; s.yokoyama@adfa.edu.au; h.yonezawa@adfa.edu.au).}}
\thanks{D. Dong is with the School of Engineering and Information Technology, University of New South Wales, Canberra, ACT 2600, Australia, and with Institute for Automatic Control and Complex Systems (AKS), University of Duisburg-Essen, 47057 Duisburg, Germany {(\tt\small daoyidong@gmail.com).}}
\thanks{I. R. Petersen is with the Research School of Electrical, Energy and Materials Engineering, The Australian National University, Canberra, ACT 2601, Australia
        {( \tt\small i.r.petersen@gmail.com).}}
\thanks{Q. Gao is with the School of Automation Science and Electrical Engineering, State Key Laboratory of Software Development Environment, Beijing, and also with Advanced Innovation Center for Big Data and Brain Computing, Beihang University, Beijing 100191, China {(\tt\small qing.gao.chance@gmail.com).}}
\thanks{S. X. Ding is with the Institute for Automatic Control and Complex Systems (AKS), University of Duisburg-Essen, 47057 Duisburg, Germany {(\tt\small  steven.ding@uni-due.de).}}}

%



\maketitle

\begin{abstract}
Robustness and reliability are two key requirements for developing practical quantum control systems. The purpose of this paper is to design a coherent feedback controller for a class of linear quantum systems suffering from Markovian jumping faults so that the closed-loop quantum system has both fault tolerance and $H^\infty$ disturbance attenuation performance. This paper first extends the physical realization conditions from the time-invariant case to the time-varying case for linear stochastic quantum systems. By relating the fault tolerant $H^\infty$ control problem to the dissipation properties and the solutions of Riccati differential equations, an $H^\infty$ controller for the quantum system is then designed by solving a set of linear matrix inequalities (LMIs). In particular, an algorithm is employed to introduce additional noises and to construct the corresponding input matrices to ensure the physical realizability of the quantum controller. For real applications of the developed fault-tolerant control strategy, we present a linear quantum system example from quantum optics, where the amplitude of the pumping field randomly jumps among different values. It is demonstrated that a quantum $H^\infty$ controller can be designed and implemented using some basic optical components to achieve the desired control goal.\end{abstract}

\begin{IEEEkeywords}
Coherent quantum feedback control, $H^\infty$ control, fault-tolerant quantum control, linear quantum systems, quantum controller.
\end{IEEEkeywords}

%
\IEEEpeerreviewmaketitle

\section{Introduction}
%
%
%
%
\IEEEPARstart{D}{eveloping} robust and reliable quantum control systems is a fundamental task with practical significance in implementing various quantum technologies \cite{xiang2017coherent,xiang2016performance, wu2019learning, li2009ensemble, guo2018optimal, dong2019learning, kosut2013robust, ge2019robust, wang2018representation, Guo2019vanishing, altafini2012modeling, shu2020attosecond}. In practice, quantum systems may suffer from various kinds of faults. For example, the fluctuations of the lasers in quantum optics or fault operations on the generators of quantum resources may introduce fault signals, thereby deteriorating the performance of the system or causing the system to be unstable \cite{wang2016fault}. However, many unique features of quantum systems, such as measurement reduction and noncommutative observables \cite{dong2010quantum}, make fault-tolerant control strategies for classical systems difficult to be extended to their quantum counterparts \cite{wang2016fault, ding2008model, blanke2006diagnosis}. It is one of the main goals of this paper to develop a fault-tolerant feedback control approach for a class of linear quantum systems with fault signals. Feedback control, including measurement-based feedback control and coherent feedback control, plays an important role in quantum control theory since it has the capacity to suppress uncertainties and noises, thus having good robustness \cite{liu2016lyapunov, liu2019filter}. In measurement-based feedback control scheme, a measurement device is used to extract the plant information which is then fed back to the original system through the control channel to achieve desired closed-loop behavior \cite{belavkin1992quantum2, wiseman1994quantum}. Quantum measurement is different from the classical one, in the sense that it inevitably causes quantum state collapsion and introduces additional stochastic noises \cite{van2005feedback}. In addition, the time to process the measurement outcomes and calculate the control signal cannot be ignored in general, which causes a time delay problem in measurement-based feedback control \cite{liu2019feedback}. Coherent feedback control, where the controller itself is also a quantum system and has no such disadvantages, is used in our approach \cite{R1, nurdin2009coherent, maalouf2010coherent, zhang2010direct, yamamoto2014coherent}.

Unlike the cases in classical control systems, where we usually assume that all the designed controllers are physically realizable \cite{R1}, the controller designed using mathematical models for a quantum plant may not correpond to a real physical quantum system. Hence, the physical realization in designing a coherent feedback controller needs to be considered. James \emph{et al.} \cite{R1} have deduced necessary and sufficient conditions of physical realizability for a class of quantum systems described by quantum stochastic differential equations (QSDEs). They also pointed out that if the designed quantum controller does not satisfy the realizable conditions, one can introduce additional quantum noises and adjust the corresponding input matrices to make the controller physically realizable. For a class of linear quantum systems described by complex transfer function matrices, a construction algorithm has been proposed to physically synthesize them in \cite{petersen2011cascade}. Nurdin \emph{et al.} \cite{nurdin2009network, nurdin2010synthesis, nurdin2010synthesis2} proposed several ways to synthesize a quantum system described by time-invariant linear differential equations, and discussed how to implement the quantum systems using some basic optical components. These synthesis schemes are of significance in designing coherent feedback control strategies even though the experimental setup is often complex and challenging.

On the other hand, $H^\infty$ control is a well known robust control method that has been used in both classical and quantum systems \cite{petersen1991first,zhou1988algebraic,ravi1991h,R1}. A quantum version of the standard dissipation properties has been proposed in \cite{R1}, where the $H^\infty$ control problem for quantum systems was formulated using two Riccati equations. By solving these Riccati equations, a controller is obtained and can be implemented as either a fully classical system, a purely quantum system or a mixture of both quantum and classical elements. While \cite{R1} only considered the cases of time-invariant quantum systems, in practical applications, time-varying linear quantum systems are often encountered. A dynamic game approach to designing a classical $H^\infty$ controller for a class of time-varying linear quantum systems has been proposed in \cite{maalouf2012time}, by recognising the equivalence between a quantum system and a corresponding auxiliary classical stochastic system. A linear quadratic Gaussian (LQG) optimal controller has been designed in \cite{hassen2010lqg} to optimize the squeezing level achieved in one of the quadratures of the fundamental optical field for the time-varying quantum systems.

This paper aims to solve the time-varying $H^\infty$ coherent feedback control problem for a linear quantum system suffering from a fault signal. The dissipation properties of the time-varying quantum systems are first presented, by which the $H^\infty$ control problem is formulated in several Riccati differential equations and a group of LMIs. The fault under consideration is modelled as a Markovian chain on a probability space \cite{gao2016fault, gao2019design}. The physical realization conditions for time-varying quantum systems are then presented and an algorithm is given to construct a physically meaningful quantum controller. In many practical applications, quantum optical systems have shown powerful potential for developing future quantum technology \cite{yonezawa2012quantum, asavanant2019generation, hansch1999laser, serikawa2016creation}. In this paper, we use a squeezer that has been widely used in quantum optics \cite{serikawa2016creation} to test the effectiveness of our control approach. A purely quantum $H^\infty$ controller consisting of basic optical components is designed to ensure that the system has desired robust performance even when suffering from faults.

The contributions of this paper are summarized as follows.
\begin{enumerate}
\item[$\bullet$] The dissipation properties of time-varying quantum systems are illustrated and used to design an $H^\infty$ quantum controller.
\item[$\bullet$] The physical realization conditions for time-varying linear quantum systems are investigated.
\item[$\bullet$] The proposed quantum fault-tolerant $H^\infty$ control method is applied to quantum optical systems, where an optical parametric amplifier is recognized to be a time-varying linear quantum system with a fault signal.
\item[$\bullet$] A quantum controller is designed using several basic optical components to achieve fault-tolerant coherent $H^\infty$ control for a class of quantum systems.
\end{enumerate}

The rest of this paper is organized as follows. Section \ref{secmodel} presents the system model and the problem formulation. In Section \ref{dissipation}, a main theorem is obtained to illustrate the equivalence between the dissipation properties, $H^\infty$ control problem and relevant Riccati differential equations. Section \ref{controldesign} presents a controller design in terms of LMIs. In Section \ref{application}, we provide an application of fault-tolerant quantum control for a class of quantum systems in quantum optics. A squeezer where the pumping field suffers from a fault signal is considered. A purely quantum controller is implemented by some basic optical components. Section \ref{conclusion} concludes this paper.

\textit{Notation}: $A$ represents an operator in Hilbert space, and is a matrix with proper dimension. $A^T$ represents the transpose of $A$; $A^\dagger$ is adjoint of $A$, and $A^{-1}$ is the inverse of $A$; $\parallel A\parallel_{\infty}$ represents the $H^\infty$ norm of the operator $A$. ${\rm Tr}(A)$ is the trace of $A$; $\Im(A)$ represents the imaginary part of $A$; $\mbox{i}$ means imaginary unit, i.e., $\mbox{i} = \sqrt{-1}$; $\mbox{I}$ is the identity matrix with proper dimension; $\hbar$ is the reduced Planck constant; $\mathbb{S}$ represents the state space.

\section{System Model and Problem Formulation}
\label{secmodel}
Linear quantum systems are usually described by time-invariant linear differential equations. However, when the system suffers from a fault process, the equations may no longer be time invariant. In this paper, we consider the following time-varying linear quantum system:
\begin{equation}
\label{eq8}
\begin{aligned}
dx(t)&=A(t)x(t)dt+Bd\omega(t); x(0)=x_0,\\
dy(t)&=Cx(t)dt+Dd\omega(t),
\end{aligned}
\end{equation}
where $A(t)\in \mathbb{R}^{n\times n}, B\in \mathbb{R}^{n\times n_{\omega}}, C\in \mathbb{R}^{n_y \times n}, D \in \mathbb{R}^{n_y \times n_{\omega}}$, and $x(t)=[x_1(t), x_2(t), \cdots, x_n(t)]^T$ is a vector of self-adjoint possibly noncommutative system variables. The initial system variables satisfy the commutation relation \cite{R1}
\begin{equation}
\label{eq2}
[x_i(0),x_j(0)]=2\mbox{i}\Theta_{jk}, j,k=1, \cdots, n.
\end{equation}
Here the commutator is defined as $[A, B]=AB-BA$. $\Theta_{jk}$ is defined to be of one of the following form \cite{R1}
\begin{itemize}
\item Canonical if $\Theta={\rm diag}(J,\cdots,J)$;
\item Degenerate canonical if $\Theta={\rm diag}(0_{n'\times n'},J,\cdots,J)$,where $0<n'\leq n$.
\end{itemize}
Here, $J$ is the real skew-symmetric matrix $J=\begin{bmatrix}
0 & 1\\
-1 & 0
\end{bmatrix}$.
$\omega$ represents the disturbance input, and is assumed to have the form
\begin{equation}
\label{eq3}
d\omega(t)=\beta_{\omega}(t)dt+d\tilde{\omega}(t),
\end{equation}
where $\beta_{\omega}(t)$ is a self adjoint process and $\tilde{\omega}(t)$ is the noise part. The quantum noise satisfies It$\hat{o}$ table $d\tilde{\omega}(t)d\tilde{\omega}^T(t)=F_{\tilde{\omega}}dt$ with a nonnegative matrix $F_{\tilde{\omega}}$ \cite{belavkin1992quantum}. We write
\begin{equation}
\label{eq3-1}
S_{\tilde{\omega}}=\frac{1}{2}(F_{\tilde{\omega}}+F_{\tilde{\omega}}^T),
\end{equation} and $$T_{\tilde{\omega}}=\frac{1}{2}(F_{\tilde{\omega}}-F_{\tilde{\omega}}^T),$$ where $T_{\tilde{\omega}}$ satisfies the following equation
\begin{equation}
\label{eq4}
[d\tilde{\omega}(t), d\tilde{\omega}^T(t)]=d\tilde{\omega}(t)d\tilde{\omega}^T(t)-(d\tilde{\omega}(t)d\tilde{\omega}^T(t))^T=2T_{\tilde{\omega}}dt.
\end{equation}

The system \eqref{eq8} can be related to optical parametric amplifiers, where the laser field may often be treated in a classical way. If the laser device of the pumping field is subject to a fault process, a time-varying Hamiltonian will be introduced to the linear differential equations. In this case, the system Hamiltonian depends on the fault process and can be described as $H(F(t))$ where $F(t)$ is a fault process\cite{gao2016fault}. The following definition describes a time-varying open quantum harmonic oscillator.
\begin{definition}
The system \eqref{eq8} (with $\beta_{\omega}=0$) is said to be an open quantum harmonic oscillator if $\Theta$ is canonical and there exist a quadratic Hamiltonian $H=\frac{1}{2}x(0)^T R(t) x(0)$, with a real and symmetric Hamiltonian matrix $R(t)$ of dimension $n\times n$, and a coupling operator $L=\Lambda x(0)$, with complex-valued coupling matrix $\Lambda$ of dimension $n_{\omega}\times n$, such that
\[
\begin{aligned}
x_k(t)&=U(t)^{*}x_k(0)U(t), k=1, \cdots, n;\\
y_l(t)&=U(t)^{*}\omega_l(0)U(t), l=1, \cdots,n_y,
\end{aligned}
\]
where $\{U(t); t\geq0\}$ is an adapted process of unitary operators satisfying the following QSDE:
\[
\begin{aligned}
dU(t)&=\left(-\mbox{i}H(F(t))dt-\frac{1}{2}L^{\dagger}Ldt+[-L^{\dagger} L^T]\Gamma d\omega(t) \right)U(t),\\
U(0)&=\mbox{I}.
\end{aligned}
\]
In this case, the matrices $A, B, C, D$ are given by
\begin{equation}
\label{eq5}
A=2\Theta(R(F(t))+\Im(\Lambda^{\dagger}\Lambda)),
\end{equation}
\begin{equation}
\label{eq6}
B=2\mbox{i}\Theta[-\Lambda^{\dagger} \Lambda^T]\Gamma,
\end{equation}
\begin{equation}
\label{eq7}
C=P_{N_y}^T\begin{bmatrix}
\sum_{N_y} & 0_{N_y \times N_\omega}\\
0_{N_y \times N_\omega} & \sum_{N_y}
\end{bmatrix} \begin{bmatrix}
\Lambda+\Lambda^{\#}\\
-\mbox{i}\Lambda+\mbox{i}\Lambda^{\#}
\end{bmatrix}.
\end{equation}
Here, $N_{\omega}=\frac{n_{\omega}}{2}, N_y=\frac{n_y}{2}$. $\Gamma=P_{N_\omega}{\rm diag}_{N_\omega}(M)$ with $M=\frac{1}{2}\begin{bmatrix}
1 & \mbox{i}\\
1 &-\mbox{i}
\end{bmatrix}$, and $P_k$ is the permutation matrix satisfying $$P_k a=\begin{bmatrix}
a_1 &a_3 & \cdots & a_{2m-1} & a_2 &a_4 &\cdots&a_{2m}
\end{bmatrix}^T,$$
for an arbitrary vector $a=\begin{bmatrix}
a_1 & a_2 & \cdots &a_{2m}
\end{bmatrix}^T$. And $\sum_{N_y}=\begin{bmatrix}\mbox{I}_{N_y\times N_y} & 0_{N_y \times (N_\omega-N_y)} \end{bmatrix}$.
\end{definition}

The main goals of this paper are to analyze the dissipation properties of  the time-varying stochastic linear quantum systems described by \eqref{eq8}, and to design a coherent quantum feedback controller to achieve closed-loop $H^\infty$ performance.

\section{Dissipation Properties}
\label{dissipation}
In this section, we consider some dissipation properties for the time-varying quantum system \eqref{eq8}. Dissipation properties state the relation between storage function and the supply functions in terms of system energy \cite{willems1972dissipative}, and relevant discussions in the context of time-invariant linear quantum systems have been given in \cite{R1}. Here we make a further extension to the time-varying cases. In addition, we formulate the strict bounded real lemma, which will be used in the controller synthesis later.

We consider a quantum system described as follows:
\begin{equation}
\label{eq13}
\begin{aligned}
dx(t)&=A(t)x(t)dt+\begin{bmatrix}B&G\end{bmatrix}\begin{bmatrix}d\omega(t)^T & d\nu(t)^T\end{bmatrix}^T,\\
dz(t)&=Cx(t)dt+\begin{bmatrix}D & H\end{bmatrix}\begin{bmatrix}d\omega(t)^T & d\nu(t)^T\end{bmatrix}^T.
\end{aligned}
\end{equation}
Here, $d\omega(t)=\beta_{\omega}(t)dt+d\tilde{\omega}$ represents the disturbance input with the quantum noise $d\tilde{\omega}$, and $d\nu$ represents additional noise.

We first define a storage function $V(x(t))=x(t)^TP(t)x(t)$, where $P(t)$ is a time-varying positive definite symmetric matrix, and then define the following operator valued quadratic function $$\gamma(x, \beta_{\omega})=\begin{bmatrix}
x^T & \beta_{\omega}^T
\end{bmatrix}^T R \begin{bmatrix}
x\\
\beta_{\omega}
\end{bmatrix},$$
as the supply function, where $R$ is a constant real symmetric matrix.

\begin{definition}\cite{R1}
\label{defdissipation}
The system \eqref{eq13} is said to be dissipative with supply rate $\gamma(x, \beta_{\omega})$ if there exists a positive time-varying storage function $V(x(t))=x(t)^TP(t)x(t)$ and a constant $\lambda>0$ such that
\begin{equation}
\label{eq14}
\langle V(x(t)) \rangle+\int_0^t \langle \gamma(x(s), \beta_{\omega}(s)) \rangle ds\leq \langle V(x(0)) \rangle +\lambda t, \forall t>0,
\end{equation}
where $\langle V(x(t)) \rangle$ represents the expectation of the operator $V(x(t))$.
\end{definition}

The system \eqref{eq13} is said to be strictly dissipative if there exists a constant $\epsilon>0$ such that inequality \eqref{eq14} holds for the supply function with $R+\epsilon \mbox{I}$.

\begin{definition}\cite{R1}
\label{defboundedreal}
The quantum systems \eqref{eq13} is said to be bounded real with disturbance attenuation $g$ if the system is dissipative with
\begin{equation}
\label{eq15}
\begin{aligned}
\gamma(x, \beta_{\omega})&=\beta_z^T \beta_z-g^2 \beta_{\omega}^T\beta_\omega\\
&=\begin{bmatrix}
x^T & \beta_{\omega}^T
\end{bmatrix} \begin{bmatrix}
C^TC & C^TD\\
D^TC & D^TD-g^2\mbox{I}
\end{bmatrix}\begin{bmatrix}
x\\
\beta_{\omega}
\end{bmatrix},
\end{aligned}
\end{equation}
where $\beta_z(t)=Cx(t)+D\beta_{\omega}(t)$. Also we say that the system \eqref{eq13} is strictly bounded real with disturbance attenuation $g$ if the system is strictly dissipative with supply rate \eqref{eq15}.
\end{definition}

With these definitions, the following theorem states the relationship between the dissipation properties and the Riccati differential equations, as well as the $H^\infty$ control problem, which will be used to design a coherent controller.
\begin{theorem}
\label{the1}
For the system \eqref{eq13}, the following four statements are equivalent:
\begin{itemize}
\item[1)]The system \eqref{eq13} is strictly bounded real with disturbance attenuation $g$;
\item[2)]There exists a positive definite matrix $\tilde{P}(t)$ such that
\begin{align*}
&\dot{\tilde{P}}(t)+A(t)^T\tilde{P}(t)+\tilde{P}(t)A(t)+C^T C\\
&+(C^T D+\tilde{P}(t)B)(g^2 \mbox{I}-D^TD)^{-1}(D^TC+B^T \tilde{P}(t))\\
&<0;
\end{align*}
\item[3)]The Riccati differential equation
\begin{align*}
&\dot{P}(t)+A(t)^TP(t)+P(t)A(t)+C^TC\\
&+(C^TD+P(t)B)(g^2\mbox{I}-D^TD)^{-1}(D^TC+B^TP(t))\\
&=0
\end{align*}
has a stabilizing solution $P(t)\geq 0$; and
\item[4)]The homogeneous system $\dot{x}(t)=A(t)x(t)$ is exponentially stable, and the operator mapping $\omega$ to $z$ satisfies $\parallel T_{z\omega}\parallel_{\infty}<g$.
\end{itemize}
\end{theorem}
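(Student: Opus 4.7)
The plan is to prove the four-way equivalence by establishing the cyclic chain $(1)\Rightarrow(2)\Rightarrow(3)\Rightarrow(4)\Rightarrow(1)$, adapting the classical time-invariant strict bounded real lemma argument used in \cite{R1} to the present time-varying quantum setting.

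For $(1)\Rightarrow(2)$, I would take $V(x(t))=x(t)^T\tilde{P}(t)x(t)$ as the storage function promised by strict dissipativity with supply rate \eqref{eq15} perturbed by $\epsilon\mathbb{I}$, and apply the quantum It\^o formula to compute $dV$. The noise term $d\tilde{\omega}d\tilde{\omega}^T=F_{\tilde{\omega}}dt$ produces a trace-type It\^o correction $\operatorname{Tr}([B\ G]F[B\ G]^T\tilde{P}(t))\,dt$ that is bounded and can be absorbed into the constant $\lambda$ in Definition \ref{defdissipation}. Taking expectations, differentiating in $t$, and supremizing over the free parameter $\beta_{\omega}$ by completing the square (which is well-defined since $D^TD-g^2\mathbb{I}<0$ makes the quadratic concave in $\beta_\omega$) yields exactly the strict Riccati differential inequality of (2). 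The existence of a positive-definite $\tilde P(t)$ follows from the positivity of the storage function.

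For $(2)\Rightarrow(3)$, I would invoke the standard monotonicity/comparison theory for Riccati differential equations: the existence of a strict upper-solution $\tilde P(t)$ guarantees that the Riccati differential equation has a nonnegative stabilizing solution $P(t)$ on any finite horizon, extended to the whole real line by a limiting argument together with a uniform bound $0\leq P(t)\leq \tilde P(t)$. For $(3)\Rightarrow(4)$, I would use $V(x(t))=x(t)^T P(t)x(t)$ as a Lyapunov function. Using the Riccati equation, a direct completion of squares rewrites $\dot V+\beta_z^T\beta_z-g^2\beta_\omega^T\beta_\omega$ as $-\bigl(\beta_\omega-\beta_\omega^{*}\bigr)^T(g^2\mathbb{I}-D^TD)\bigl(\beta_\omega-\beta_\omega^{*}\bigr)\leq 0$ (modulo the It\^o correction). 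Taking $\beta_\omega=0$ yields exponential stability of $\dot x(t)=A(t)x(t)$, while integrating over $[0,T]$ and taking $T\to\infty$ gives $\|T_{z\omega}\|_\infty<g$. Finally, $(4)\Rightarrow(1)$ follows by constructing a storage function from the exponential stability and norm bound, for instance via the time-varying observability Gramian, with the quantum It\^o correction again absorbed into $\lambda$.

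The main obstacle is the interaction between the quantum It\^o calculus, the noncommutativity of the system variables $x_i(t)$, and the time-varying coefficients. In particular, the $\dot P(t)$ term in the Riccati differential equation has no counterpart in the time-invariant analysis of \cite{R1}, so every completion-of-squares and Lyapunov-integration step must be restructured around a differential (rather than algebraic) Riccati equation, and the existence of a stabilizing solution must be obtained through a horizon-sweeping limit rather than a spectral argument. Additional care is required to symmetrize operator-valued quadratic forms before taking expectations, and to verify that the It\^o correction from the quantum noise contributes only a bounded constant to the dissipation inequality rather than state-dependent drift terms that could otherwise destroy the equivalence.
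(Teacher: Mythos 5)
Your treatment of $1)\Leftrightarrow 2)$ is essentially the paper's own argument: take $V(x(t))=x(t)^TP(t)x(t)$, apply the quantum It\^o formula, absorb the trace-type correction $\lambda_0={\rm Tr}\{[B\ G]F[B\ G]^TP(t)\}$ into the constant $\lambda$ of Definition \ref{defdissipation}, and pass to a Riccati-type inequality. The step you gloss over, however, is exactly the quantum-specific one. After taking expectations (whether you complete the square in $\beta_\omega$ first or keep the stacked vector $(x,\beta_\omega)$ as the paper does in \eqref{eq18}), what you have is a bound $\langle\rho, \,\text{(operator-valued quadratic form)}\rangle\leq\lambda$ valid for all Gaussian states $\rho$; since $x$ is a vector of noncommuting operators you cannot ``evaluate'' the form at arbitrary real vectors to conclude negativity of the coefficient matrix. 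The paper closes this gap with Lemma \ref{lemma1} in Appendix B ($\langle\rho,x^TXx\rangle\leq\lambda$ for all Gaussian $\rho$ if and only if $X\leq 0$), which yields the matrix inequality \eqref{eq19} and then, by a Schur complement using $g^2\mbox{I}-D^TD>0$, statement 2). Your remark about ``symmetrizing operator-valued quadratic forms before taking expectations'' does not substitute for this lemma; without it (or an equivalent argument) your $1)\Rightarrow 2)$ is not established. Note also that the converse is cheaper than your route: the paper gets $2)\Rightarrow 1)$ directly by reversing the Schur complement and reusing the same storage function with $\lambda_0$ as the dissipation constant, rather than going around through statement 4).

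Second, you propose to re-derive $2)\Leftrightarrow 3)\Leftrightarrow 4)$ (Riccati comparison theory, a Lyapunov argument, and a Gramian-based storage function for $4)\Rightarrow 1)$). The paper does none of this: those three statements involve only the classical data $(A(t),B,C,D)$, and their equivalence is quoted wholesale from the time-varying strict bounded real lemma of \cite{chen2000strict}. Reproving them is a substantial detour---the existence of a stabilizing solution of the time-varying Riccati differential equation from a strict upper solution is precisely the hard content of that cited result---and your specific suggestion for $4)\Rightarrow 1)$ via the time-varying observability Gramian does not work as stated: the Gramian satisfies a Lyapunov differential equation and carries no information about the $-g^2\beta_\omega^T\beta_\omega$ term in the supply rate \eqref{eq15}. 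To certify the quantum dissipation inequality you need a matrix satisfying the Riccati differential inequality of statement 2), so you would in effect pass through $4)\Rightarrow 2)$ (the classical lemma) and then $2)\Rightarrow 1)$ (the quantum step above) anyway. The efficient structure is therefore the paper's: cite \cite{chen2000strict} for $2)\Leftrightarrow 3)\Leftrightarrow 4)$ and prove only $1)\Leftrightarrow 2)$, with Lemma \ref{lemma1} doing the genuinely quantum work.
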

\begin{proof}

Since the equivalence between statements 2), 3) and 4) has been proved in \cite{chen2000strict}, we here only prove the equivalence between 1) and 2).

For a given storage function $V(x(t))=x(t)^T P(t) x(t)$, we calculate
\begin{equation}
\begin{aligned}
\label{eq16}
d\langle V(x(t)) \rangle &=\left\langle dx^T(t)\cdot \left[P(t)x(t)\right]+x^T(t)\cdot d\left[P(t)x(t)\right] \right\rangle \\
&=\langle x^T(t)\left( A^T(t)P(t)+P(t)A(t)+\dot{P}(t) \right)x(t)\\
&+\beta_{\omega}^T(t)B^TP(t)x(t)+x^T(t)P(t)B\beta_{\omega}(t)+\lambda_0 \rangle,
\end{aligned}
\end{equation}
where $\dot{P}(t)=\frac{dP(t)}{dt}$ and
$$\lambda_0={\rm Tr}\left\{ \begin{bmatrix} B^T\\G^T \end{bmatrix}X\begin{bmatrix}B & G \end{bmatrix}F \right\},$$
and $F$ is defined as
$$Fdt=\begin{bmatrix}d\omega\\d\nu\end{bmatrix}\begin{bmatrix} d\omega^T & d\nu^T\end{bmatrix}.$$

Suppose $\rho$ is an initial Gaussian state, and let $E_0$ denote the expectation with respect to a random state $\phi$. We have $\langle V(x(t))\rangle=\langle \rho, E_0\left[ V(x(t))\right]\rangle$ \cite{parthasarathy2012introduction}. If the system \eqref{eq13} is bounded real with disturbance attenuation $g$, then we have
\begin{equation}
\label{eq17}
\begin{aligned}
&\Bigg \langle \rho, \int_0^t E_0 \Big[ x^T(s)\left( A^T(t)P(t)+P(t)A(t)+\dot{P}(t) \right)x(s)\\
&+\beta_{\omega}^T(s)B^TP(t)x(s)+x^T(s)P(t)B\beta_{\omega}(s)\\
&+\lambda_0 +\gamma(x(s),\beta_{\omega}(s))\Big]ds \Bigg\rangle \leq \lambda t,
\end{aligned}
\end{equation}
where $\gamma(x(s),\beta_{\omega}(s))$ is defined as \eqref{eq15}.

When $t\rightarrow 0$, we obtain
\begin{equation}
\label{eq18}
\begin{aligned}
& \Big \langle \rho, x^T(A^TP(t)+P(t)A)x+\beta_{\omega}^TB^TP(t)+x^TP(t)B\beta_{\omega} \\
&+\lambda_0
+\begin{bmatrix}
x^T & \beta_{\omega}^T
\end{bmatrix} \begin{bmatrix}
C^TC & C^TD \\
D^TC & D^TD-g^2 \mbox{I}
\end{bmatrix} \begin{bmatrix}
x\\
\beta_{\omega}
\end{bmatrix} \Big\rangle \leq \lambda.
\end{aligned}
\end{equation}
According to Lemma \ref{lemma1} in Appendix B, we have
\begin{equation}
\label{eq19}
\begin{bmatrix}
\dot{P}(t)+A^T P(t)+P(t)A+C^TC & C^TD+P(t)B\\
B^TP(t)+D^TC & D^TD-g^2\mbox{I}
\end{bmatrix}\leq 0.
\end{equation}

Furthermore, the system is strictly bounded real with disturbance attenuation $g$ if and only if there exists a real positive definite symmetric matrix such that the following matrix inequality is satisfied:
\begin{equation}
\begin{bmatrix}
\dot{P}(t)+A^T P(t)+P(t)A+C^TC & C^TD+P(t)B\\
B^TP(t)+D^TC & D^TD-g^2\mbox{I}
\end{bmatrix}< 0,
\end{equation}
which means 2).

From statement 2) to statement 1):

Based on Schur's complement, statement 2) implies \eqref{eq19}. Then by choosing the storage function as $V(x)=x^T P(t)x$ and by setting $\lambda_0=B^TP(t)B F$, one obtains \eqref{eq18} from \eqref{eq19} and moreover,
\begin{equation}
\label{eq20}
\langle V(x(t)) \rangle-\langle V(x(0)) \rangle+\int_0^t \langle \gamma(x(s),\beta_{\omega}(s)) \rangle ds\leq \lambda_0 t .
\end{equation}
From Definition 2 and Definition 3, one has that the system \eqref{eq13} is strictly bounded real with disturbance attenuation $g$.

The proof is then completed.
\end{proof}


\section{Coherent $H^\infty$ control design}
\label{controldesign}
Quantum optical systems usually are sensitive to external disturbance. In this paper, we consider a linear quantum system suffering from abrupt variation in its parameters, structure or system dynamics such that the system dynamics may randomly transit between a finite number of different modes, named faulty modes. It is then appropriate to model the fault process on a probability space $(\Omega, \mathcal{F}, \mathcal{P})$ by a continuous-time Markov chain $\{F(t)\}_{t\geq 0}$ \cite{gao2016fault}, which results in a Markovian jump linear quantum system. To be specific, $F(t)$ values within a finite set $\mathbb{S}=\{e_1, e_2, \cdots, e_N\}$ for an integer $N$. The transition rate matrix is priorly known as $\Pi=(\pi_{jk})\in \mathbb{R}^{N\times N}$, with $\pi_{jj}=-\sum_{j\neq k}\pi_{jk}$, and $\pi_{jk}\geq 0, j\neq k$. In this section, we develop a coherent $H^\infty$ control design for a class of linear quantum systems whose Hamiltonian is dependent on the fault process $F(t)$.

\subsection{Closed-loop systems}
\label{closedsystem}
The system with a disturbance input and a control input is described as
\begin{equation}
\label{eq31}
\begin{aligned}
dx(t)&=A(F(t))x(t)dt+B_1d\omega(t)+B_2du(t),\\
dz(t)&=C_1x(t)dt+D_1 du(t),\\
dy(t)&=C_2x(t)dt+D_{2}d\omega(t),
\end{aligned}
\end{equation}
 $A(F(t))$ takes finite values in $(A_1, A_2, \cdots, A_N), A_i=A(e_i)$ since the fault process $F(t)$ has been assumed to be a Markov chain, which has values within a finite set $\mathbb{S}=\{e_1, e_2, \cdots, e_N\}$.

Assume that the controller is described by the following dynamical equations
\begin{equation}
\label{eq32}
\begin{aligned}
d\xi(t)&=\mathcal{A}(t)\xi(t)dt+\mathcal{B}(t)dy(t)+\mathcal{E}(t)d\nu_K(t),\\
du(t)&=\mathcal{C}(t)\xi(t)dt+\mathcal{D}(t)d\nu_K(t),
\end{aligned}
\end{equation}
where $\xi(t)=\begin{bmatrix}
\xi_1(t) & \xi_2(t)& \cdots & \xi_{n_k}
\end{bmatrix}^T$ is a vector of self-adjoint controller variables. The noise $\nu_K$ is a vector of noncommutative Wiener processes satisfying the It$\hat{o}$ table with canonical Hermitian It$\hat{o}$ matrix $F_{\nu_{K}}$. To coincide with a Markovian jump linear plant, the controller also is assumed to jump between different modes with $(\mathcal{A}_{1}, \mathcal{B}_{1}, \mathcal{C}_{1}), \cdots, \left(\mathcal{A}_{N}, \mathcal{B}_{N}, \mathcal{C}_{N}\right)$.

We obtain the closed-loop systems by identifying $\beta_u(t)=\mathcal{C}(t)\xi(t)$ and denoting $\eta=\begin{bmatrix}
x(t)& \xi(t)
\end{bmatrix}^T$ as
\begin{equation}
\label{eq35}
\begin{aligned}
d\eta(t)&=\begin{bmatrix}
A_i & B_2\mathcal{C}_i\\
\mathcal{B}_iC_2 & \mathcal{A}_i
\end{bmatrix}\eta(t)dt+\begin{bmatrix}
B_1\\
\mathcal{B}_iD_2
\end{bmatrix}d\omega(t)\\
&+\begin{bmatrix}
B_2 \mathcal{D}_i\\
\mathcal{E}_i
\end{bmatrix}d\nu_K(t),\\
dz(t)&=\begin{bmatrix}
C_1 & D_1 \mathcal{C}_i
\end{bmatrix}\eta(t)dt+ D_1 \mathcal{D}_id\nu_K(t).
\end{aligned}
\end{equation}

The control objective here is to design a controller \eqref{eq32} such that the closed-loop system \eqref{eq35} is strictly bounded real with a given disturbance attenuation $g$, that is, there exists a positive definite matrix $P(t)$ such that
\begin{equation}
\label{eq36}
\begin{aligned}
&\left\langle \eta^T(t)P(t)\eta(t) \right\rangle+\int_0^t \Big\langle \beta_z^T(s)\beta_z(s)-g^2 \beta_{\omega}^T(s) \beta_{\omega}(s)\\
&+\epsilon\eta^T(s)\eta(s)+\epsilon\beta_{\omega}^T(s)\beta_{\omega}(s) \Big\rangle ds\\
&\leq \langle \eta^T(0)P_0\eta(0) \rangle +\lambda t, \forall t>0.
\end{aligned}
\end{equation}

\subsection{$H^\infty$ controller design}
\label{controllerdesign}
Substituting $d\omega(t)=\beta_\omega (t)dt+d\tilde{\omega}(t)$ and $du(t)=\beta_u(t)dt+d\tilde{u}(t)$ into \eqref{eq35}, we have
\begin{equation}
\label{eq46}
\begin{aligned}
d\eta(t)&=\tilde{A}_i\eta(t)dt+\tilde{B}_{1i}\beta_{\omega}(t)dt+\tilde{B}_{1i}\tilde{\omega}(t)+\tilde{B}_{2i}\nu_K(t),\\
dz(t)&=\tilde{C}_i\eta(t)dt+\tilde{D}_id\nu_K(t).
\end{aligned}
\end{equation}
Here,
$$\tilde{A}_i=\begin{bmatrix}
A_i & B_2\mathcal{C}_i\\
\mathcal{B}_iC_2 & \mathcal{A}_i
\end{bmatrix},$$
$$\tilde{B}_{1i}=\begin{bmatrix}
B_1\\
\mathcal{B}_iD_2
\end{bmatrix}, \tilde{B}_{2i}=\begin{bmatrix}
B_2 \mathcal{D}_i\\
\mathcal{E}_i
\end{bmatrix},$$
$$\tilde{C}_i=\begin{bmatrix}
C_1 & D_1\mathcal{C}_i
\end{bmatrix},$$
$$\tilde{D}_i=D_1\mathcal{D}_i.$$

For the quantum system, we define $\langle V(\eta(t))\rangle=\langle \eta ^T (t)P(t) \eta (t)\rangle$ with a positive-definite matrix $P(t)$ and
\begin{equation}
\label{eq48_1}
\begin{aligned}
&\gamma(\eta(t),\beta_{\omega} (t))\\
&=\beta_z (t)^T \beta_z (t)-g^2 \beta_{\omega}^T (t) \beta_{\omega} (t)+\epsilon \eta^T(t)\eta(t)+\epsilon \beta_\omega^T(t)\beta_\omega(t)\\
&=\eta^T (t) [\tilde{C}_i^T \tilde{C}_i+\epsilon \mbox{I}]\eta(t)-(g^2-\epsilon)\mbox{I} \beta_{\omega}^T (t) \beta_{\omega} (t).
\end{aligned}
\end{equation}
Substituting \eqref{eq48_1} into \eqref{eq14}, we obtain
\begin{equation}
\label{eq49}
\begin{aligned}
&\langle\eta^T (t)P(t)\eta(t)+\int_0^t \left \langle \eta^T (s)\left[ \tilde{C}_i^T \tilde{C}_i+\epsilon \mbox{I}\right]\eta(s)\right \rangle ds\\
-&\int_0^t \left \langle(g^2-\epsilon) \beta_{\omega}^T (s) \beta_{\omega} (s)\right \rangle ds\\
&\leq \langle \eta_0^T P_0\eta_0 \rangle+\lambda t.
\end{aligned}
\end{equation}

Define $Q(t)=\frac{1}{2}\langle \eta(t)\eta^T(t)+(\eta(t)\eta^T(t))^T\rangle$, and we have $\langle \eta^T(t)P(t)\eta(t)\rangle={\rm Tr}[ \tilde{P}(t)Q(t)]$, where $$\tilde{P}(t)=\begin{bmatrix}
P(t) & 0\\
0 & 0
\end{bmatrix},$$
$$\left \langle \eta^T(t)\left[\tilde{C}_i^T\tilde{C}_i+\epsilon \mbox{I}\right]\eta(t) \right \rangle={\rm Tr}\left\{ \left[\tilde{C}_i^T\tilde{C}_i+\epsilon \mbox{I}\right]Q(t)\right\}.$$
Hence, \eqref{eq49} becomes
\begin{equation}
\label{eq50}
\begin{aligned}
&{\rm Tr}\langle \tilde{P}(t)Q(t)\rangle +\int_0^t {\rm Tr}\left\{ \left[\tilde{C}_i^T\tilde{C}_i+\epsilon \mbox{I}\right]Q(t)\right\}ds\\
&-(g^2-\epsilon) \int_0^t\langle \beta_{\omega}^T(s)\beta_{\omega}(s) \rangle ds \\
&\leq \langle \eta^T(0) P_0 \eta(0) \rangle +\lambda t.
\end{aligned}
\end{equation}

The derivative of $Q(t)$ is
\begin{equation}
\label{52_1}
\begin{aligned}
dQ(t)&=\tilde{A}_iQ(t)+Q(t)\tilde{A}_i^T\\
&+\langle \eta (t)\rangle \beta_{\omega}^T(t)\tilde{B}_{1i}^T dt+\tilde{B}_{1i}\beta_{\omega}(t)\langle \eta^T (t)\rangle dt\\
&+\tilde{B}_{1i} S_{\tilde{\omega}}(t)\tilde{B}_{1i}^Tdt+\tilde{B}_{2i} S_{\nu_K}(t)\tilde{B}_{2i}^Tdt.
\end{aligned}
\end{equation}

We consider a corresponding matrix $Q'(t)={\rm E}(\eta_c(t)\eta_c^T(t))$, where $E(\cdot)$ denotes the stochastic expectation, and $\eta_c(t)$ is the system variable of a classical system defined as
\begin{equation}
\label{eq47}
\begin{aligned}
d\eta_c (t)&=\tilde{A}_i \eta_c (t)dt+\tilde{B}_{1i}\beta_{\omega} (t)dt+\tilde{B}_{1i}S_{\tilde{\omega}}^{1/2}d\tilde{\omega}(t)\\
&+\tilde{B}_{2i} S_{\nu_K}^{1/2} d\nu_K (t),\\
dz_c(t)&=\tilde{C}_i\eta_c(t)dt+\tilde{D}_iS_{\nu_K}^{1/2}d\nu_K(t),
\end{aligned}
\end{equation}
where $\eta_c(t)=\left[
\begin{smallmatrix}
x_c(t)\\
\xi_c(t)
\end{smallmatrix}\right]$, and $\eta_c(0)=\eta_{c0}=\begin{bmatrix}
x_c(0)\\
\epsilon_c(0)
\end{bmatrix}$.

The system \eqref{eq47} can be taken as a closed-loop system composed of a plant
\begin{equation}
\label{eq43}
\begin{aligned}
dx_c(t)&=A(F(t))x_c(t)dt+B_1d\omega_c(t)+B_2du_c(t),\\
dz_c(t)&=C_1x_c(t)dt+D_1 du_c(t),\\
dy_c(t)&=C_2x_c(t)dt+D_{2}d\omega_c(t),
\end{aligned}
\end{equation}
and a controller
\begin{equation}
\label{eq44}
\begin{aligned}
d\xi_c(t)&=\mathcal{A}(t)\xi_c(t)dt+\mathcal{B}(t)dy_c(t)+\mathcal{E}(t)d\nu_c(t),\\
du_c(t)&=\mathcal{C}(t)\xi_c(t)dt+\mathcal{D}(t)d\nu_c(t).
\end{aligned}
\end{equation}
Here, $d\omega_c(t)=\beta_{\omega}(t)dt+S_{\tilde{\omega}}^{1/2}d\tilde{\omega}(t)$, $d\nu_c(t)=S_{\nu_K}^{1/2}d\nu_K(t)$, and $x_c(0)$ is a Gaussian random vector with mean $\hat{x}_{c0}$ and convariance matrix $Y_{c0}$. $S_{\tilde{\omega}}$ and $S_{\tilde{u}}$ are defined as in \eqref{eq3-1}.

The derivative of $Q'(t)$ is calculated as
\begin{equation}
\label{eq52}
\begin{aligned}
dQ'(t)&=\tilde{A}_iQ'(t)+Q'(t)\tilde{A}_i^T\\
&+{\rm E}(\eta_c(t)) \beta_{\omega}^T(t)\tilde{B}_{1i}^Tdt+\tilde{B}_{1i}\beta_{\omega}(t){\rm E}(\eta_c^T(t)) dt\\
&+\tilde{B}_{1i} S_{\tilde{\omega}}(t)\tilde{B}_{1i}^Tdt+\tilde{B}_{2i} S_{\nu_K}(t)\tilde{B}_{2i}^Tdt.
\end{aligned}
\end{equation}

We further calculate
\begin{equation}
\label{eq53}
\begin{aligned}
\frac{d\langle \eta(t) \rangle}{dt}&=\tilde{A}_i\langle \eta(t) \rangle+\tilde{B}_{1i}\beta_{\omega}(t),\\
\frac{{\rm E}(\eta_c(t))}{dt}&=\tilde{A}_i{\rm E}(\eta_c(t))+\tilde{B}_{1i}\beta_{\omega}(t).
\end{aligned}
\end{equation}
Note that if we let the mean of the Gaussian state $ \langle \eta(0) \rangle=\check{\eta}_0=\check{\eta}_{c0}={\rm E}(\eta_c(0))$, we then have $\langle \eta(t) \rangle \equiv{\rm E}(\eta_c(t))$. Moreover, we obtain $Q(t)\equiv Q'(t)$.

For the classical system \eqref{eq47}, we define the storage function as $\langle V(\eta_c(t)) \rangle={\rm E}(\eta_c^T(t)P(t)\eta_c(t))$, and the supply function as
\begin{equation}
\label{eq51_1}
\begin{aligned}
&\gamma(\eta_c(t), \beta_{\omega}(t))\\
&=\beta_{zc}^T(t)\beta_{zc}(t)-g^2 \beta_{\omega}^T(t)\beta_{\omega}(t)+\epsilon \eta_c^T(t)\eta_c(t)+\epsilon \beta_\omega^T(t)\beta_\omega(t)\\
&=\eta_c^T(t)\left[\tilde{C}_i^T\tilde{C}_i+\epsilon \mbox{I}\right]\eta_c(t)-(g^2-\epsilon) \beta_{\omega}^T(t)\beta_{\omega}(t),
\end{aligned}
\end{equation}
and we have $${\rm E}(\eta_c^T(t)P(t)\eta_c(t))={\rm Tr}\left[ \tilde{P}(t)Q'(t) \right],$$
$${\rm E}\left\{\eta_c^T(t)\left[\tilde{C}_i^T\tilde{C}_i+\epsilon \mbox{I}\right]\eta_c(t)\right\}={\rm Tr}\left\{\left[\tilde{C}_i^T \tilde{C}_i+\epsilon \mbox{I}\right]Q'(t)\right\}.$$

Since $Q(t)=Q'(t)$, \eqref{eq50} holds if the following equation holds
\begin{equation}
\label{eq51}
\begin{aligned}
&{\rm Tr}\langle P(t)Q'(t) \rangle +\int_0^t {\rm Tr}\left\{\left[\tilde{C}_i^T \tilde{C}_i+\epsilon \mbox{I}\right]Q'(t)\right\}ds\\
&-(g^2-\epsilon) \int_0^t \beta_{\omega}^T(s)\beta_{\omega}(s)ds \\
&\leq \langle \eta_c^T(0) P_0 \eta_c(0) \rangle+\lambda t.
\end{aligned}
\end{equation}

Eq. \eqref{eq51} indicates that the classical system \eqref{eq43} is strictly bounded real with disturbance attenuation $g$. Hence, we conclude that if the classical system with the controller in the form of \eqref{eq44} is strictly bounded real with disturbance attenuation $g$, the quantum system with the same control parameters in controller \eqref{eq32} is also strictly bounded real with $g$. The following proposition on $H^\infty$ control design for classical systems,  is cited here and it will be applied to the quantum case in this paper.
\begin{proposition}\cite{de2000output}
\label{propos1}
If there exists $P=(P_1, \cdots, P_N), P_i >0$ satisfies
\begin{equation}
\label{eq33}
A_i^TP_i+P_iA_i+\sum_{j=1}^N \pi_{ii}P_j+g^{-2}P_iB_1B_1^TP_i+C_1^TC_1 <0,
\end{equation} for $i=1, \cdots, N$, then $\left\| T_{\omega z}\right\|_\infty <g$.
\end{proposition}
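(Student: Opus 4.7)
The plan is to prove the proposition using a mode-dependent stochastic Lyapunov function together with Dynkin's formula for the joint Markov process $(x(t),F(t))$. Take zero initial condition $x(0)=0$ and consider the open-loop map $\omega \mapsto z$ defined by $dx(t)=A(F(t))x(t)dt+B_1 d\omega(t)$, $z(t)=C_1 x(t)$. Define the candidate storage function $V(x,i)=x^T P_i x$ when $F(t)=e_i$. This is the natural choice because the coupled LMIs in the hypothesis are precisely the obstruction/dissipation inequalities for such a mode-dependent quadratic.

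Next I would compute the infinitesimal generator of $V$ along the trajectory. For a Markovian jump diffusion, Dynkin's formula gives
\begin{equation*}
\mathcal{L}V(x,i,\omega) = 2x^T P_i(A_i x + B_1 \omega) + \sum_{j=1}^{N}\pi_{ij}\, x^T P_j x,
\end{equation*}
where the last term captures the jump contribution (reading the summation in the hypothesis as $\sum_j \pi_{ij}P_j$, which is the standard coupled-Lyapunov convention). Adding $z^T z - g^2 \omega^T \omega = x^T C_1^T C_1 x - g^2 \omega^T \omega$ on both sides and completing the square in $\omega$ yields
\begin{equation*}
\mathcal{L}V + z^T z - g^2 \omega^T \omega = x^T \Xi_i\, x - g^2\bigl\|\omega - g^{-2}B_1^T P_i x\bigr\|^2,
\end{equation*}
where $\Xi_i := A_i^T P_i + P_i A_i + \sum_{j}\pi_{ij}P_j + g^{-2}P_i B_1 B_1^T P_i + C_1^T C_1$ is exactly the matrix appearing in the LMI. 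By the assumption $\Xi_i<0$, there exists $\delta>0$ with $\Xi_i \le -\delta I$ uniformly in $i$, so $\mathcal{L}V + z^T z - g^2 \omega^T \omega \le -\delta\|x\|^2$.

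The conclusion then follows by taking expectations, integrating from $0$ to $T$, and invoking $V(x(0),F(0))=0$ together with $\mathbb{E}[V(x(T),F(T))]\ge 0$ to obtain
\begin{equation*}
\mathbb{E}\!\int_0^T \|z(t)\|^2 dt \;\le\; g^2\, \mathbb{E}\!\int_0^T \|\omega(t)\|^2 dt - \delta\,\mathbb{E}\!\int_0^T \|x(t)\|^2 dt.
\end{equation*}
Letting $T\to\infty$ gives $\|T_{\omega z}\|_{\infty}<g$, with the strict inequality inherited from the strict LMI via the $\delta$-margin. A brief preliminary step must also establish that the coupled LMI implies mean-square exponential stability of the unforced system (setting $\omega=0$ reduces $\Xi_i$ to a strict Lyapunov inequality for the Markovian jump system), so that the above integrals are finite.

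The main obstacle I expect is the careful bookkeeping of the jump term in the generator — in particular, interpreting the summation (the statement's $\pi_{ii}P_j$ appears to be a transcription of $\sum_j \pi_{ij}P_j$) and ensuring that all cross terms from the Itô correction are handled correctly. A secondary subtlety is the precise definition of the $H^\infty$ norm for stochastic jump systems (the operator norm from $L^2$ disturbances to $L^2$ outputs in mean square), which must be stated compatibly with Definition \ref{defboundedreal}; once that identification is made, the dissipation inequality above directly yields the norm bound.
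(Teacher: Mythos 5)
The paper does not actually prove this proposition --- it is quoted verbatim from the classical Markovian-jump $H^\infty$ literature (the cited reference) and then applied to the quantum closed loop via Theorem \ref{the1}, so there is no in-paper argument to compare line by line. Your proposal reconstructs what is essentially the standard proof of the coupled bounded-real lemma for continuous-time Markov jump linear systems, and it is sound: the mode-dependent storage $V(x,i)=x^TP_ix$, the generator $\mathcal{L}V(x,i)=2x^TP_i(A_ix+B_1\omega)+\sum_j\pi_{ij}x^TP_jx$, and the completion of squares do produce exactly $x^T\Xi_i x-g^2\|\omega-g^{-2}B_1^TP_ix\|^2$ with $\Xi_i$ the LMI matrix, and your reading of the summation as $\sum_j\pi_{ij}P_j$ (the $\pi_{ii}$ in \eqref{eq33} is a transcription slip) is the correct convention; setting $\omega=0$ indeed gives the coupled Lyapunov inequality for mean-square stability. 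The one step you wave at rather than prove is the final strict inequality: the estimate $\mathbb{E}\int_0^T\|z\|^2dt\le g^2\,\mathbb{E}\int_0^T\|\omega\|^2dt-\delta\,\mathbb{E}\int_0^T\|x\|^2dt$ does not by itself read off as $\|T_{\omega z}\|_\infty<g$. Close it either by noting $\|z(t)\|^2\le\|C_1\|^2\|x(t)\|^2$, so the $\delta$-slack yields $\mathbb{E}\int_0^T\|z\|^2dt\le\bigl(1+\delta/\|C_1\|^2\bigr)^{-1}g^2\,\mathbb{E}\int_0^T\|\omega\|^2dt$, or, more conventionally, by observing that strictness of $\Xi_i<0$ and continuity in $g$ let you run the same dissipation argument with some $\tilde g<g$, giving $\|T_{\omega z}\|_\infty\le\tilde g<g$. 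With that one-line repair your argument is complete and matches the approach of the cited classical result; also note that in the deterministic-$L^2$-disturbance reading of $\omega$ there is no It\^{o} correction term to track, so the ``cross-term bookkeeping'' you flag as an obstacle does not actually arise (at most it contributes the constant offset absorbed into $\lambda t$ in Definition \ref{defdissipation}).
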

Here, the norm $\left\| T_{\omega z}\right\|_\infty $ is the $H^\infty$-norm for the system from disturbance input $\omega(t)$ to the error output $z(t)$. Now, we have the following conclusion for quantum systems under consideration.

\begin{theorem}
\label{the2}
If there exists a controller of the form \eqref{eq32} such that the closed-loop system \eqref{eq35} is strictly bounded real with disturbance attenuation $g$, then the linear matrix inequalities (LMIs) \eqref{eq37}-\eqref{eq38} have feasible solutions $X_i, Y_i$ and $L_i, F_i$, where for $i=1, \cdots, N$, we define
\[
S_i(Y)=-{\rm diag}\left(Y_1, \cdots, Y_{i-1}, Y_{i+1}, \cdots, Y_N\right),
\]
and
\[
R_i(Y)=\left[\begin{smallmatrix}
\sqrt{\pi_{1i}}Y_i & \cdots & \sqrt{\pi_{(i-1)i}}Y_i & \sqrt{\pi_{(i+1)i}}Y_i & \cdots & \sqrt{\pi_{Ni}}Y_i
\end{smallmatrix}\right].
\]
\begin{figure*}[hb]
\hrulefill
\begin{equation}
\label{eq37}
\begin{bmatrix}
A_i^TX_i +X_iA_i+L_iC_2+C_2^TL_i^T+C_1^TC_1+\sum_{j=1}^N \pi_{ij}X_j & X_iB_1+L_iD_{2}\\
B_1X_i+D_{2}^TL_i^T & -g^2\mbox{I}
\end{bmatrix}<0.
\end{equation}
\end{figure*}
\begin{figure*}[hb]
\begin{equation}
\label{eq38}
\begin{bmatrix}
Y_i &\mbox{I}\\
\mbox{I} & X_i
\end{bmatrix}>0,
\begin{bmatrix}
A_iY_i +Y_iA_i^T+B_2F_i+F_i^TB_2^T+\pi_{ii}Y_i+g^{-2}B_1B_1^T & (C_1Y_i+D_{1}F_i)^T & R_i(Y)\\
C_1Y_i+D_{1}F_i & -\mbox{I}& 0\\
R_i^T(Y) & 0 & S_i(Y)
\end{bmatrix}<0.
\end{equation}
\end{figure*}
In this case, the controller is given by
\begin{equation}
\label{eq39}
\mathcal{C}_i=F_iY_i^{-1},
\end{equation}
\begin{equation}
\label{eq40}
\mathcal{B}_i=(Y_i^{-1}-X_i)^{-1}L_i,
\end{equation}
\begin{equation}
\label{eq41}
\mathcal{A}_i=(Y_i^{-1}-X_i)^{-1}M_iY_i^{-1},
\end{equation}
where $M_i=-A_i^T-X_iA_iY_i-X_iB_2F_i-L_iC_2Y_i-C_1^T(C_1Y_i+D_{12}F_i)-g^{-2}(X_iB_1+L_iD_{21})B_1^T-\sum_{j=1}^N \pi_{ij}Y_j^{-1}Y_i$.

Similarly, if the LMIs \eqref{eq37} and \eqref{eq38} have feasible solutions and the controller is defined as in \eqref{eq39}-\eqref{eq41}, then the closed-loop system \eqref{eq35} is strictly bounded real with the disturbance attenuation $g$.
\end{theorem}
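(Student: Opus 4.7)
The plan is to exploit the equivalence established earlier in Section \ref{controllerdesign} between the quantum closed-loop system \eqref{eq35} and the classical auxiliary system \eqref{eq47}: since $Q(t) \equiv Q'(t)$ under the matching of initial means, inequality \eqref{eq51} shows that the quantum system is strictly bounded real with disturbance attenuation $g$ if and only if the classical closed-loop system (obtained from plant \eqref{eq43} and controller \eqref{eq44} with the same matrices $\mathcal{A}_i, \mathcal{B}_i, \mathcal{C}_i$) is strictly bounded real with the same attenuation. This reduces the claim to a standard $H^\infty$ output-feedback synthesis problem for a Markovian jump linear system, for which LMI techniques apply.

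For the sufficiency direction, I would first extend Theorem \ref{the1} to the Markovian-jump setting in the spirit of Proposition \ref{propos1}: there exist positive definite mode-dependent matrices $P_i$ satisfying a coupled Riccati inequality for the closed-loop matrices $\tilde{A}_i, \tilde{B}_{1i}, \tilde{C}_i, \tilde{D}_i$, with the extra mode-transition term $\sum_{j=1}^N \pi_{ij} P_j$. The resulting inequality is bilinear in $(P_i, \mathcal{A}_i, \mathcal{B}_i, \mathcal{C}_i)$, so I would apply the standard Scherer/Gahinet change of variables: partition $P_i$ and $P_i^{-1}$ into $2\times 2$ blocks matching the plant/controller state decomposition, identify $X_i$ and $Y_i$ with the upper-left blocks of $P_i$ and $P_i^{-1}$, and introduce the new decision variables $F_i = \mathcal{C}_i Y_i$ and $L_i$ absorbing the product of $\mathcal{B}_i$ with the relevant off-diagonal block of $P_i$, with $M_i$ in \eqref{eq41} encoding the transformed $\mathcal{A}_i$. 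Pre- and post-multiplication of the coupled Riccati inequality by a block-diagonal congruence built from $\mathrm{diag}(I, Y_i)$-type factors, followed by Schur complements on the $\beta_\omega$-block and on the Markov coupling $\sum_{j\neq i} \pi_{ij} Y_i Y_j^{-1} Y_i$ (isolated via the auxiliary blocks $R_i(Y)$ and $S_i(Y)$), yields exactly \eqref{eq37} and \eqref{eq38}, while the compatibility relation $P_i P_i^{-1} = \mbox{I}$ collapses under Schur complement to the $(1,1)$ block of \eqref{eq38}.

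For the necessity direction, assuming feasibility of \eqref{eq37}-\eqref{eq38}, the block $\begin{bmatrix} Y_i & \mbox{I} \\ \mbox{I} & X_i \end{bmatrix} > 0$ guarantees $Y_i^{-1} - X_i > 0$ so that \eqref{eq40} and \eqref{eq41} are well-posed. Reconstructing $\mathcal{A}_i, \mathcal{B}_i, \mathcal{C}_i$ via \eqref{eq39}-\eqref{eq41} amounts to inverting the change of variables; the two LMIs then reassemble into the coupled Markov-jump Riccati inequality for the closed-loop auxiliary classical system. By the Markov-jump bounded-real lemma this implies the classical closed-loop system is strictly bounded real with attenuation $g$, and by the quantum-to-classical equivalence used in the forward direction, the quantum closed-loop system \eqref{eq35} inherits the same property.

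The main obstacle is the bookkeeping of the congruence transformation and Schur complements in the presence of the Markov coupling term $\sum_j \pi_{ij} P_j$: it is this term that forces the appearance of the auxiliary blocks $R_i(Y), S_i(Y)$ in \eqref{eq38} and complicates the standard Scherer derivation. The rest of the argument — in particular the reconstruction of a valid mode-dependent Lyapunov tuple $P_i$ from the pair $(X_i, Y_i)$ via the compatibility LMI, and verifying that $M_i$ correctly recovers $\mathcal{A}_i$ — is routine once the transformation is fixed.
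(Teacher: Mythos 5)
Your proposal follows essentially the same route as the paper: the paper also reduces the quantum problem to the classical Markovian-jump $H^\infty$ output-feedback problem via the $Q(t)\equiv Q'(t)$ equivalence of Section \ref{controllerdesign} together with Theorem \ref{the1}, and then simply invokes the classical LMI synthesis result of \cite{de2000output} instead of re-deriving it. The Scherer-type change of variables, congruence/Schur-complement steps and the well-posedness of the controller reconstruction that you sketch are precisely the content of that cited classical result, so your argument is correct and matches the paper's proof in substance.
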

\begin{proof}
According to Theorem \ref{the1}, this theorem can be proved in a straightforward way using the corresponding classical $H^\infty$ control results in \cite{de2000output}.
\end{proof}

\subsection{Physical realization of the controller}
\label{physicalrealisation}
Unlike the classical systems in which we assume a differential equation always corresponds to some real plant, the linear differential equation in \eqref{eq31} does not necessarily represent a meaningful quantum system. Sufficient and necessary conditions for physical realization of the following time-invariant linear system have been proposed in \cite{R1}:
\begin{equation}
\label{eq1}
\begin{aligned}
dx(t)&=Ax(t)dt+Bd\omega(t), x(0)=x_0;\\
dy(t)&=Cx(t)dt+Dd\omega(t),
\end{aligned}
\end{equation}
where the first condition is to preserve the commutation relation (CR) during the whole evolution time. In this paper, we need to consider the physical realization of time-varying linear quantum systems.

The following proposition shows sufficient and necessary conditions for preserving CR of systems \eqref{eq8} with only the parameter $A(t)$ being time-varying.
\begin{proposition}
\label{propos2}
The time-invariant system \eqref{eq1} preserves CR, if and only if the system \eqref{eq8} preserves CR with time evolving.
\end{proposition}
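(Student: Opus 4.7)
The plan is to derive the infinitesimal condition that CR preservation imposes on the drift and noise coefficients, and then observe that this condition is pointwise in $t$ and algebraically identical in form to the time-invariant case. First I would apply the quantum It\^o rule to $[x(t),x(t)^T]$ using $dx(t)=A(t)x(t)dt+Bd\omega(t)$. The bilinear terms give $A(t)[x(t),x(t)^T]\,dt+[x(t),x(t)^T]A(t)^T\,dt$; the quadratic term $[dx(t),dx(t)^T]$ contributes $2BT_{\tilde\omega}B^T\,dt$ via the It\^o table \eqref{eq4}; and the mixed terms involving $[d\omega(t),x(t)^T]$ vanish because the noise increment is independent of the current system variables.

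Imposing $[x(t),x(t)^T]=2\mbox{i}\Theta$ for all $t$ forces the drift of this commutator-valued process to vanish, yielding the pointwise identity
\[
\mbox{i} A(t)\Theta+\mbox{i}\Theta A(t)^T+BT_{\tilde\omega}B^T=0, \qquad \forall t\geq 0.
\]
The same It\^o calculation applied to the time-invariant system \eqref{eq1} with $A(t)\equiv A$ reproduces exactly the CR-preservation condition $\mbox{i} A\Theta+\mbox{i}\Theta A^T+BT_{\tilde\omega}B^T=0$ established in \cite{R1}. Thus the two requirements have identical algebraic form, differing only in that in \eqref{eq8} the identity must hold at every value attained by the drift.

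From this equivalence the biconditional follows in both directions. For the forward implication, if \eqref{eq1} preserves CR with coefficient $A$, then freezing $A(t)\equiv A$ in the pointwise identity shows the time-varying evolution with that same drift preserves CR. Conversely, if \eqref{eq8} preserves CR as $A(t)$ varies, then selecting any instantaneous value of $A(t)$ yields a time-invariant system whose coefficients satisfy the realizability condition of \cite{R1}. The main technical point I expect will be the careful handling of the It\^o differential of the noncommutative commutator, in particular isolating the skew-symmetric part of $d\tilde\omega\,d\tilde\omega^T$ so that only $T_{\tilde\omega}$ enters the drift, and observing that because $B$ is held fixed, a single quadratic-variation term $BT_{\tilde\omega}B^T$ compensates the drift uniformly in $t$; this is precisely the structural hypothesis that makes the proposition clean.
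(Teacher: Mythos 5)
Your It\^o-calculus derivation of the pointwise condition $\mbox{i}A(t)\Theta+\mbox{i}\Theta A(t)^T+BT_{\tilde{\omega}}B^T=0$ for all $t$ is sound (and it usefully makes explicit a step the paper leaves implicit by citing Theorem 2.1 of \cite{R1}), but the final step of your argument has a genuine gap: it never connects the particular constant matrix $A$ of the time-invariant system \eqref{eq1} with the time-varying family $A(t)$ of \eqref{eq8}. Your forward direction ("freeze $A(t)\equiv A$") only re-proves CR preservation for the time-invariant system itself, not for the actually varying $A(t)$; your converse ("select any instantaneous value of $A(t)$") shows each frozen value satisfies the algebraic identity, but the matrix $A$ in \eqref{eq1} need not be among those instantaneous values. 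For arbitrary, unrelated $A$ and $A(t)$ the two pointwise conditions are simply not equivalent, so the biconditional does not follow from your argument as written.

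What makes the proposition true — and what the paper's proof uses — is the structural relation coming from the open quantum harmonic oscillator: $A=2\Theta\bigl(R+\Im(\Lambda^{\dagger}\Lambda)\bigr)$ while $A(t)=2\Theta\bigl(R(F(t))+\Im(\Lambda^{\dagger}\Lambda)\bigr)$, i.e.\ the time dependence enters only through the \emph{symmetric} Hamiltonian matrix $R(F(t))$, with $\Lambda$ and $B$ fixed. Since $R$ (or $R(F(t))$) is symmetric and $\Theta$ is skew-symmetric, its contribution cancels:
\begin{equation*}
\mbox{i}\,2\Theta R\Theta+\mbox{i}\Theta\bigl(2\Theta R\bigr)^T=2\mbox{i}\Theta R\Theta-2\mbox{i}\Theta R\Theta=0,
\end{equation*}
so the quantity $\mbox{i}A(t)\Theta+\mbox{i}\Theta A(t)^T+BT_{\tilde{\omega}}B^T$ reduces to $2\mbox{i}\Theta\Im(\Lambda^{\dagger}\Lambda)\Theta+2\mbox{i}\Theta[\Im(\Lambda^{\dagger}\Lambda)]^T+BT_{\tilde{\omega}}B^T$, which is independent of $t$ and identical to the corresponding quantity for \eqref{eq1}. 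Hence the time-invariant and time-varying CR conditions are literally the same algebraic constraint on $\Lambda$, $\Theta$, $B$, $T_{\tilde{\omega}}$, and the equivalence follows. To repair your proof, keep your It\^o derivation of the pointwise condition, but replace the "freeze/select instantaneous values" step with this cancellation of the symmetric Hamiltonian part.
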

\begin{proof}
According to Theorem 2.1 in \cite{R1} (see Appendix for details), if the system \eqref{eq1} preserves CR, one has
\begin{equation}
\label{eq10}
\begin{aligned}
&\mbox{i}A\Theta+\mbox{i}\Theta A^T+BT_{\tilde{\omega}}B^T\\
&=\mbox{i} 2\Theta [R+\Im (\Lambda^{\dagger} \Lambda)] \Theta+\mbox{i} \Theta \left(2\Theta[R+\Im (\Lambda^{\dagger} \Lambda)] \right)^T+B T_{\tilde{\omega}} B^T\\
&=2\mbox{i}\Theta \Im(\Lambda^{\dagger} \Lambda)\Theta+2\mbox{i}\Theta[\Im (\Lambda^{\dagger} \Lambda)]^T
+B T_{\tilde{\omega}} B^T\\
&=0.
\end{aligned}
\end{equation}

By replacing $A$ with $A(t)=2\Theta (R(F(t))+\Im(\Lambda^{\dagger})\Lambda)$, we have
\begin{equation}
\label{eq12}
\begin{aligned}
&\mbox{i} 2\Theta [R(t)+\Im (\Lambda^{\dagger} \Lambda)] \Theta+\mbox{i} \Theta \left(2\Theta[R(t)+\Im (\Lambda^{\dagger} \Lambda)] \right)^T+B T_{\tilde{\omega}} B^T\\
&=2\mbox{i}\Theta \Im(\Lambda^{\dagger} \Lambda)\Theta+2\mbox{i}\Theta[\Im (\Lambda^{\dagger} \Lambda)]^T
+B T_{\tilde{\omega}} B^T\\
&=0.
\end{aligned}
\end{equation}
That is,
\begin{equation}
\label{eq9}
\mbox{i}A(t)\Theta+\mbox{i}\Theta A^T(t)+BT_{\tilde{\omega}}B^T=0.
\end{equation}
Hence, the time-varying parameter $A(t)$ does not affect the CR of quantum systems, which proves the proposition.
\end{proof}

Based on this proposition, we find that the sufficient and  necessary conditions of physical realizability can be directly extended to the following proposition from Theorem 3.4 in \cite{R1}.

\begin{proposition}
\label{propos3}
The system \eqref{eq8} is physically realizable if and only if
\begin{equation}
\label{eq21-a}
\mbox{i}A(t)\Theta+\mbox{i}\Theta A^T(t)+B T_{\omega}B^T =0,
\end{equation}
\begin{equation}
\label{eq21}
\begin{aligned}
B\begin{bmatrix}
\mbox{I}_{n_y \times n_y}\\
o_{(n_{\omega}-n_y)\times n_y}
\end{bmatrix}&=\Theta C^TP_{N_y}^T \\
&\times \begin{bmatrix}
0_{N_y \times N_y} & \mbox{I}_{N_y \times N_y} \\
-\mbox{I}_{N_y \times N_y} & 0_{N_y \times N_y}
\end{bmatrix}P_{N_y} \\
&=\Theta C^T {\rm diag}_{N_y}(J).
\end{aligned}
\end{equation}
\end{proposition}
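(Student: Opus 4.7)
The plan is to reduce Proposition \ref{propos3} to Theorem 3.4 of \cite{R1} by using Proposition \ref{propos2} as the bridge for the commutation-relation condition, and by exploiting the fact that among the system matrices $A(t)$, $B$, $C$, $D$ in \eqref{eq8} only $A$ depends on time. So the only place where the time-varying nature has any bite is in condition \eqref{eq21-a}; condition \eqref{eq21} is structurally identical to its time-invariant counterpart.

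First I would handle necessity. Suppose the system \eqref{eq8} is physically realizable as an open quantum harmonic oscillator in the sense of Definition 1, with time-independent coupling operator $L=\Lambda x(0)$ and time-varying Hamiltonian $H(F(t))=\tfrac12 x(0)^T R(F(t)) x(0)$, producing $A(t), B, C$ via \eqref{eq5}-\eqref{eq7}. For the coupling condition, the derivation of \eqref{eq21} in Theorem 3.4 of \cite{R1} uses only $B$, $C$, $\Theta$, $P_{N_y}$ and $\Lambda$, none of which is time dependent, so \eqref{eq21} is inherited verbatim. For the CR-preservation condition, Proposition \ref{propos2} substitutes $A(t)=2\Theta(R(F(t))+\Im(\Lambda^\dagger \Lambda))$ into the left-hand side of \eqref{eq9} and shows that the $R$-dependent terms cancel because $\Theta^T=-\Theta$ and $R(F(t))$ is symmetric, leaving exactly the time-invariant identity. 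Thus \eqref{eq21-a} holds for every $t$.

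For sufficiency I would argue as follows. Assume \eqref{eq21-a} and \eqref{eq21} hold. Fix an arbitrary $t$ and view the system momentarily as a time-invariant one with matrix $A(t)$. Theorem 3.4 of \cite{R1} then guarantees the existence of a real symmetric Hamiltonian matrix, which I would denote $R(t)$, and a complex coupling matrix such that \eqref{eq5}-\eqref{eq7} are satisfied. Because $B$, $C$ and hence $\Lambda$ are time-independent (they are determined by \eqref{eq6}-\eqref{eq7} and those matrices do not carry a $t$-dependence), we can extract $R(t)$ from $A(t)$ by
\begin{equation*}
R(t)=\tfrac12 \Theta^{-1} A(t)-\Im(\Lambda^\dagger \Lambda),
\end{equation*}
and the symmetry of $R(t)$ follows from \eqref{eq21-a} together with the skew-symmetry of $\Theta$, by exactly the calculation in Proposition \ref{propos2}. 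This furnishes a bona fide time-varying open quantum harmonic oscillator realization in the sense of Definition 1, so the system is physically realizable.

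The step I expect to be the main obstacle is the consistency check in the sufficiency direction: one must ensure that the coupling matrix $\Lambda$ can be kept time-independent while all of the time variation is absorbed into $R(t)$, and that the $R(t)$ so obtained is automatically real and symmetric rather than only satisfying \eqref{eq21-a} as a matrix identity. Proposition \ref{propos2} essentially dispatches this concern, since it demonstrates that the $R$ terms drop out of the CR identity regardless of their value, so the symmetry requirement on $R(t)$ reduces to the same algebraic condition handled in the time-invariant proof of \cite{R1}. Once this is in place, the rest of the argument is a direct quotation of Theorem 3.4 of \cite{R1} applied pointwise in $t$.
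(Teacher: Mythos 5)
Your proposal is correct and follows essentially the same route as the paper, which simply invokes Proposition \ref{propos2} together with Theorem 3.4 of \cite{R1}: since only $A(t)$ carries the time dependence (through $R(F(t))$), condition \eqref{eq21} is inherited verbatim from the time-invariant case, while Proposition \ref{propos2} shows the commutation-relation condition \eqref{eq21-a} is unaffected by $R(F(t))$, so Theorem 3.4 applies pointwise in $t$ with a fixed $\Lambda$ and a time-varying symmetric $R(t)$. Your added detail on extracting $R(t)=\tfrac12\Theta^{-1}A(t)-\Im(\Lambda^{\dagger}\Lambda)$ and checking its symmetry is exactly the consistency check the paper leaves implicit.
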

\begin{proof}
This proposition can be proved directly by  using Proposition \ref{propos2} and Theorem 3.4 in \cite{R1}.
\end{proof}

Furthermore, for a more general time-varying quantum system described by the following equation
\begin{equation}
\label{eq22}
\begin{aligned}
dx(t)&=A(t)x(t)dt+B(t)d\omega(t),\\
dy(t)&=C(t)x(t)dt+D(t)d\omega(t),
\end{aligned}
\end{equation}
we have the following corollary.
\begin{corollary}
\label{coro1}
The system \eqref{eq22} is physically realizable if and only if the following two equations
\begin{equation}
\label{eq23-a}
\mbox{i}A(t)\Theta+\mbox{i}\Theta A^T(t)+B(t) T_{\omega}B^T(t) =0,
\end{equation}
\begin{equation}
\label{eq23-b}
\begin{aligned}
B(t)\begin{bmatrix}
\mbox{I}_{n_y \times n_y}\\
o_{(n_{\omega}-n_y)\times n_y}
\end{bmatrix}&=\Theta C^T(t)P_{N_y}^T \\
&\times \begin{bmatrix}
0_{N_y \times N_y} & \mbox{I}_{N_y \times N_y} \\
-\mbox{I}_{N_y \times N_y} & 0_{N_y \times N_y}
\end{bmatrix}P_{N_y} \\
&=\Theta C^T(t) {\rm diag}_{N_y}(J),
\end{aligned}
\end{equation}
hold for each time $t$.
\end{corollary}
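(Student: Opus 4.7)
The plan is to reduce Corollary \ref{coro1} to Proposition \ref{propos3} by observing that the two physical realizability conditions are algebraic relations between the instantaneous values of the coefficient matrices, hence they naturally generalize to pointwise-in-$t$ conditions when $B$, $C$, $D$ are allowed to vary with time. The argument would therefore be a careful ``freezing'' argument rather than a fresh derivation.

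First, I would treat the commutation preservation condition \eqref{eq23-a}. Using the quantum It\^o rule applied to $x(t)x^T(t) - (x(t)x^T(t))^T$ along the trajectories of \eqref{eq22}, the time derivative of the commutator $[x(t), x^T(t)]$ equals $A(t)\Theta + \Theta A^T(t) + B(t) T_{\omega} B^T(t)$ at each instant $t$. The computation is the same pointwise calculation that underlies Theorem~2.1 of \cite{R1} and that we reused in the proof of Proposition~\ref{propos2}; nothing in it depends on $B$ being constant, only on the instantaneous values of $A$ and $B$. Preservation of $[x(t), x^T(t)] = 2\mbox{i}\Theta$ for all $t$ is therefore equivalent to the vanishing of this expression at each $t$, which is precisely \eqref{eq23-a}.

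Next, I would treat the input-output structural condition \eqref{eq23-b}. Following Theorem~3.4 of \cite{R1}, this condition arises from matching the output increment $dy(t)$ to the appropriate subblock of the input increment $d\omega(t)$ via the open quantum harmonic oscillator representation, and is an algebraic identity relating the instantaneous matrices $B$ and $C$. Allowing $B(t)$ and $C(t)$ to depend on $t$ simply demands that this identity hold at every $t$. Combining this with the first step, and invoking Proposition~\ref{propos3} frozen at each $t$, gives the necessity direction.

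For sufficiency, I would argue that if \eqref{eq23-a} and \eqref{eq23-b} hold at each $t$, then at each $t$ there exist a real symmetric matrix $R(t)$ and a complex coupling matrix $\Lambda(t)$ (constructed as in Proposition~\ref{propos3}) such that the relations \eqref{eq5}--\eqref{eq7} are satisfied instantaneously; these generate a time-varying open quantum harmonic oscillator via a Hudson--Parthasarathy QSDE with time-dependent generators, whose solution realizes the system \eqref{eq22}. The main obstacle is making this last construction rigorous: one must appeal to an existence result for unitary solutions of QSDEs with time-varying coefficients and verify measurability/boundedness of $R(t), \Lambda(t)$ in $t$ so that the time-ordered exponential is well-defined. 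Once this technicality is handled, the corollary follows cleanly from Propositions~\ref{propos2} and \ref{propos3}.
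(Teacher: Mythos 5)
Your proposal is correct and follows essentially the same route as the paper, which simply states that the corollary ``can be proved directly from the proof of Theorem 3.4 in \cite{R1} at each time $t$'' and omits the details; your pointwise ``freezing'' of the commutation-preservation and input--output conditions is exactly that argument. In fact you supply more than the paper does, in particular the uniqueness/existence considerations for the time-varying QSDE in the sufficiency direction, which the paper leaves implicit.
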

\begin{proof}
The corollary can be proved directly from the proof of Theorem 3.4 in \cite{R1} at each time $t$ and the proof is omitted.
\end{proof}

With Proposition \ref{propos3} and Corollary \ref{coro1}, it is not difficult to check the physical realizability of the controller $\{\mathcal{A}_i, \mathcal{B}_i, \mathcal{C}_i\}$ designed by the $H^\infty$ control method. It should be noted that Theorem \ref{the2} only gives the parameters $\{\mathcal{A}_i, \mathcal{B}_i, \mathcal{C}_i\}$, while the parameters $\mathcal{E}_i$ and $\mathcal{D}_i$ are not constructed from the $H^\infty$ control design method, which gives the flexibility to obtain a physically realizable controller by introducing additional quantum noises and constructing $\mathcal{E}_i$ and $\mathcal{D}_i$. Theorem 5.5 in \cite{R1} illustrates that for any given $\mathcal{A}_i, \mathcal{B}_i, \mathcal{C}_i$, there exist controller parameters $\mathcal{E}_i, \mathcal{D}_i$ and quantum noises $\nu_K$ such that the controller \eqref{eq32} is physically realizable. The algorithm to construct a realizable controller can be found in \cite{vuglar2016quantum}, which can be used here for each controller $i$.

\section{Applications of fault-tolerant control to quantum optical systems}
\label{application}
In this section, we consider possible applications of fault-tolerant coherent $H^\infty$ control of linear quantum systems in quantum optics. Here, we consider that a quantum plant suffers from a fault signal and a coherent feedback controller is designed to make the quantum system fault tolerant and robust against external disturbance inputs. We first give a brief introduction to some necessary optical components that will be used to set up the quantum plant as well as the controller.
\subsection{Basic optical components}
\subsubsection{Cavities}
The optical cavity is a widely used component in quantum optics. It is composed of two mirrors in the simplest case as well as a group of mirrors for more general cases. The following linear differential equation describes the dynamics of an optical field in an empty cavity with $m$ mirrors,
\begin{equation}
\label{eq61}
d\hat{a}(t)=-\frac{\kappa}{2}\hat{a}(t)dt+\sum_{j=1}^{m}\sqrt{\kappa_j}d\hat{A}_{in,j}(t).
\end{equation}
Here, $\hat{a}(t)$ is the annihilation operator inside the cavity, $\kappa=\sum_{j=1}^m \kappa_j$ represents the total decay rates while $\kappa_j$ is the decay rate for the $j$-th mirror. Also $\hat{A}_{in,j}$ represents the input field of the $j$-th mirror. The output of each mirror is described as
\begin{equation}
\label{eq62}
d\hat{A}_{out,j}(t)=\sqrt{\kappa_j}\hat{a}(t)dt-d\hat{A}_{in,j}(t).
\end{equation}

\subsubsection{Second-order nonlinear effect}
Nonlinear crystals are materials causing nonlinear effects and have been widely used in many quantum optical experiments. Squeezed states are generated via a second-order nonlinear process where a fundamental field (oscillating at $2\nu$) is coupled with the second harmonic field (oscillating at $2\nu$). Its Hamiltonian is given by \cite{bachor2004guide}
\begin{equation}
\label{eq54}
H=\mbox{i}\hbar \chi^{(2)}(\hat{b}^{\dagger}\hat{a}^2-(\hat{a}^{\dagger})^2\hat{b}),
\end{equation}
where $\hat{a}$ is the annihilation operator of the fundamental field while $\hat{b}$ is that of the harmonic field. We may understand this Hamiltonian from the up-conversion and down-conversion aspect, where the first term can be explained as the annihilation of two photons at the fundamental frequency and the creation of one at the harmonic frequency, while the second term represents the reverse process. Hence, the Heisenberg equations of these two fields are written as
\begin{equation}
\label{eq55}
\begin{aligned}
\dot{\hat{a}}&=-2\chi^{(2)}\hat{a}^{\dagger}\hat{b},\\
\dot{\hat{b}}&=\chi^{(2)}\hat{a}^2,
\end{aligned}
\end{equation}
where $\chi^{(2)}$ is the second-order nonlinear coefficient of the crystal.

\subsubsection{Dynamical squeezers}
An optical parametric oscillator (OPO) is composed of several mirrors and a nonlinear crystal. The nonlinear interaction then can be enhanced by a cavity. By combining the equations of motion from the crystal and an empty cavity, we obtain the dynamical equation of the internal cavity mode as
\begin{equation}
\label{eq56}
d\hat{a}(t)=-\frac{\kappa}{2} \hat{a}(t)dt-2\chi^{(2)}\hat{a}^{\dagger}(t)\hat{b}(t)dt+\sum_{j=1}^m\sqrt{\kappa_j}d\hat{A}_{in,j}(t),
\end{equation}
while the output equations remain as in \eqref{eq62}.

We assume the harmonic field (or what we call the pumping field) is an intense field, which can be undepleted by its interaction with the nonlinear crystal and the fundamental field of interest \cite{bachor2004guide}. Under this assumption, we can replace the operator $\hat{b}$ with a complex number $\beta$, which means we may ignore the dynamics of the pumping field, leaving the dynamics of fundamental mode as
\begin{equation}
\label{eq63}
d\hat{a}(t)=-\frac{\kappa}{2} \hat{a}(t)dt-\chi\hat{a}^{\dagger}(t)dt+\sum_{j=1}^m\sqrt{\kappa_j}d\hat{A}_{in,j}(t),
\end{equation}
where $\chi=2\chi^{(2)}\beta$. Here, we only consider the case $\chi\in \mathbb{R}$. The parameter $\beta=|\beta|e^{i\phi}$, is composed of the real amplitude $|\beta|$ and the phase $\phi$ of the pump field. We assume only the amplitude of the pump field suffers from the fault signal, while its phase remains unchanged as $\phi=2k\pi, k=1,2,\cdots$. Under this assumption, $\chi$ becomes real. This parameter is pumping dependent, which means we can change its value by changing the pumping field. In this paper, we define this class of OPO as a dynamical squeezer, which has been widely used in quantum optical experiments to generate the squeezing states.

\subsubsection{Static squeezers}
For some practical applications, we may not be interested in the internal dynamics of a squeezer and only focus on the transformation matrix between its input and output fields. To obtain this relation, the evolution time inside the cavity may be assumed to be very short. This relation has been obtained in \cite{petersen2011realization} by assuming the evolution time of the fundamental mode is extremely short. To directly understand the static squeezer from an experimental point of view, we first transform the operators of cavity mode, input and output fields to the frequency domain by using the Fourier transform:
\begin{equation}
\label{eq57}
\begin{aligned}
-\mbox{i}\omega\hat{a}(\omega)&=-\frac{\kappa}{2}\hat{a}(\omega)-\chi\hat{a}^{\dagger}(-\omega)-i\omega\sum_{j=1}^m \sqrt{\kappa_j}\hat{A}_{in,j}(\omega),\\
-\mbox{i}\omega \hat{a}^{\dagger}(-\omega)&=-\frac{\kappa}{2} \hat{a}^{\dagger}(-\omega)-\chi\hat{a}(\omega)-i\omega\sum_{j=1}^m\sqrt{\kappa_j}\hat{A}_{in,j}^{\dagger}(-\omega),
\end{aligned}
\end{equation}

\begin{equation}
\label{eq64}
(-i\omega)\hat{A}_{out,j}(\omega)=\sqrt{\kappa_j}\hat{a}(\omega)-(-i\omega)\hat{A}_{in,j}(\omega).
\end{equation}
By solving the two equations in \eqref{eq57}, we obtain
\begin{equation}
\label{eq58}
\begin{aligned}
\hat{a}(\omega)&=\frac{-i\omega}{(\frac{\kappa}{2}-\mbox{i}\omega)^2-|\chi|^2}\sum_{j=1}^m\bigg\{ \left[(\frac{\kappa}{2}-\mbox{i}\omega)\sqrt{\kappa_j}\right]\hat{A}_{in,j}(\omega) \\
&-\chi \sqrt{\kappa_j} \hat{A}_{in,j}^{\dagger}(-\omega)\bigg\}.
\end{aligned}
\end{equation}
Substituting \eqref{eq58} into \eqref{eq64}, we obtain the transfer function for the squeezer as
\begin{equation}
\label{eq59}
\begin{aligned}
&\hat{A}_{out,j}(\omega)=\frac{1}{(\frac{\kappa}{2}-\mbox{i}\omega)^2-|\chi|^2}\cdot\\
&\bigg\{\sum_{j=1}^k \left[(\frac{\kappa}{2}-\mbox{i}\omega)\kappa_j\hat{A}_{in,j}(\omega)-\chi \kappa_j \hat{A}_{in,j}^{\dagger}(-\omega)\right]\\
&-\left[(\frac{\kappa}{2}-\mbox{i}\omega)^2-|\chi|^2\right]\hat{A}_{in,j}(\omega)\bigg\}.
\end{aligned}
\end{equation}

The assumption that the evolution time is extremely short means that the squeezer has a broad squeezing spectrum. Squeezers with broad bandwidth have been realized in many experimental setups such as a monolithic cavity \cite{yonezawa2010generation} and a single-pass waveguide \cite{yoshino2007generation}. For this kind of squeezer, if we only focus on a range of frequencies, the noise power can be taken as a constant, which means the relation between input and output of the squeezer remains unchanged with frequency. Without loss of generality, we consider the case with $\omega \ll \kappa$. We can then obtain a simplification of \eqref{eq59}
\begin{equation}
\label{eq65}
\begin{aligned}
\hat{A}_{out,j}&=\frac{1}{(\frac{\kappa}{2})^2-\chi^2}\cdot\bigg\{\sum_{j=1}^m \left[\frac{\kappa}{2}\kappa_j\hat{A}_{in,j}-\chi \kappa_j \hat{A}_{in,j}^{\dagger}\right]\\
&-\left[\frac{\kappa}{2}^2-\chi^2\right]\hat{A}_{in,j}\bigg\}.
\end{aligned}
\end{equation}

A squeezer with the relationship between input and output as in \eqref{eq65} is called static squeezer, which has been proposed in \cite{petersen2011realization}. From the perspective of quantum control theory, we usually analyze a system in the time domain. In this case, the static squeezer can be seen as a squeezer at a stable state, where the dynamics between input and output have been assumed to be static. The relation has been deduced in \cite{petersen2011realization}, where the time of light going through the squeezer has been assumed to be short enough. While from the experimental point of view, the dynamics are usually analyzed in the frequency domain by using the Fourier transformation. In this case, the static squeezer is an approximation of a dynamical squeezer with a broad squeezing spectrum, and the output of the fundamental field at any frequency within the interested frequency range remains constant. This assumption results in \eqref{eq65} for the relation between input and output.

\subsection{Fault-tolerant control design for quantum optical systems}
In this section, we consider a linear quantum system arising in quantum optics. When this system suffers from a fault signal an $H^\infty$ coherent feedback controller can be designed to deal with the fault process as well as the disturbance input. The system has been designed to generate squeezed light in \cite{serikawa2016creation}. The system is a dynamical squeezer composed of three mirrors, and its simplified diagram is shown as in Fig. \ref{system}.

\begin{figure}
\centering
\includegraphics[width=3.5in]{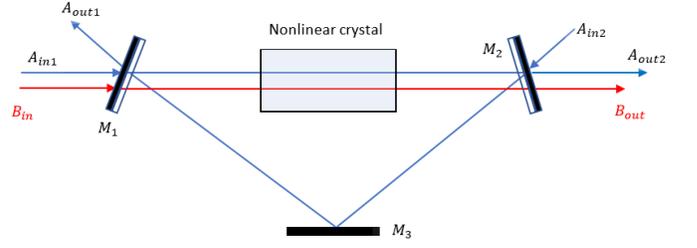}
\caption{Schematic of the OPO composed of three mirrors. $M_1$ and $M_2$ have partial transmissivity for the fundamental field and high transimissivity for the pumping field; $M_3$ is a fully reflective mirror for the fundamental field.}
\label{system}
\end{figure}

We obtain the differential equations of motion for this dynamical squeezer as
\begin{equation}
\label{eq24}
\begin{aligned}
d\hat{a}(t)&=\left[-\frac{\kappa}{2} \hat{a}(t)- \chi\hat{a}^{\dagger}(t)\right]dt\\
&+\sqrt{\kappa_{1}}d\hat{A}_{in1}(t)+\sqrt{\kappa_{2}} d\hat{A}_{in2}(t), \\
d\hat{a}^{\dagger}(t)&=\left[-\frac{\kappa}{2} \hat{a}^{\dagger}(t)-\chi\hat{a}\right]dt\\
&+\sqrt{\kappa_{1}}d\hat{A}_{in1}^{\dagger}(t)+\sqrt{\kappa_{2}}d\hat{A}_{in2}^{\dagger}(t),
\end{aligned}
\end{equation}
\begin{equation}
\label{eq25}
\begin{aligned}
&d\hat{A}_{out1}(t)=\sqrt{\kappa_{1}}\hat{a}(t)dt-d\hat{A}_{in1}(t),\\
&d\hat{A}_{out2}(t)=\sqrt{\kappa_{2}}\hat{a}(t)dt-d\hat{A}_{in2}(t),
\end{aligned}
\end{equation}
where $\kappa_1$ and $\kappa_2$ are decay rates for the mirror $M_1$ and mirror $M_2$, and $\kappa=\kappa_1+\kappa_2$. The system has two input fields $\hat{A}_{in1}$, $\hat{A}_{in2}$, and two output fields $\hat{A}_{out1}$ and $\hat{A}_{out2}$. $B_{in}$ and $B_{out}$ are the input and output of the pumping field.

To ensure that the operators are self-adjoint and work with real-valued coefficients, we write the amplitude and phase quadrature as $$x=\begin{bmatrix}
\hat{a}(t)+\hat{a}^{\dagger}(t)\\
-\mbox{i}(\hat{a}(t)-\hat{a}^{\dagger}(t))
\end{bmatrix},$$
and denote
\[
\begin{aligned}
\omega(t)&=\begin{bmatrix}
\hat{A}_{in1}(t)+\hat{A}_{in1}^{\dagger}(t)\\
-\mbox{i}(\hat{A}_{in1}(t)-\hat{A}_{in1}^{\dagger}(t))
\end{bmatrix},\\
u(t)&=\begin{bmatrix}
\hat{A}_{in2}(t)+\hat{A}_{in2}^{\dagger}(t)\\
-\mbox{i}(\hat{A}_{in2}(t)-\hat{A}_{in2}^{\dagger}(t))
\end{bmatrix},\\
z(t)&=\begin{bmatrix}
\hat{A}_{out2}(t)+\hat{A}_{out2}^{\dagger}(t)\\
-\mbox{i}(\hat{A}_{out2}(t)-\hat{A}_{out2}^{\dagger}(t))
\end{bmatrix},\\
y(t)&=\begin{bmatrix}
\hat{A}_{out1}(t)+\hat{A}_{out1}^{\dagger}(t)\\
-\mbox{i}(\hat{A}_{out1}(t)-\hat{A}_{out1}^{\dagger}(t))
\end{bmatrix}.
\end{aligned}
\]
The system may be described by
\begin{equation}
\label{eq26}
\begin{aligned}
dx(t)&=A(t)x(t)dt+B_1d\omega(t)+B_2du(t),\\
dz(t)&=C_1 x(t)dt+D_1du(t),\\
dy(t)&=C_2 x(t)dt+D_2d\omega(t),
\end{aligned}
\end{equation}
where
\[
\begin{aligned}
& A(t)=\begin{bmatrix}
-\frac{\kappa}{2}-\chi(t) & 0\\
0 & \chi(t)-\frac{\kappa}{2}
\end{bmatrix},\\
& B_1=\sqrt{\kappa_{1}}\mbox{I}, B_2=\sqrt{\kappa_{2}}\mbox{I},\\
& C_1=\sqrt{\kappa_{2}}\mbox{I}, D_1=-\mbox{I},\\
& C_2=\sqrt{\kappa_1}\mbox{I}, D_2=-\mbox{I}.
\end{aligned}
\]
Here $\mbox{I}$ is the $2\times 2$ identity matrix, and we write $\chi(t)$ as a time-varying parameter since this value may change with the fault process.

Since the pump laser is treated in a classical way, if the macroscopic laser device is subject to an undesired fault signal, a time-varying Hamiltonian is introduced.  In some practical applications, we may assume that the amplitude of the laser is not changing with time continuously and only jumps among several values. This makes it reasonable to model the fault process as a Markovian chain. Therefore, the whole system is a Markovian jump linear quantum system. To deal with this fault process, as well as the disturbance input that the dynamical squeezer itself suffers from, a coherent feedback controller is designed and connected to the plant directly without any measurement. After applying a controller to the plant, the whole closed system is shown in Fig. \ref{systemandcontrol}.
\begin{figure}
\centering
\includegraphics[width=3.5in]{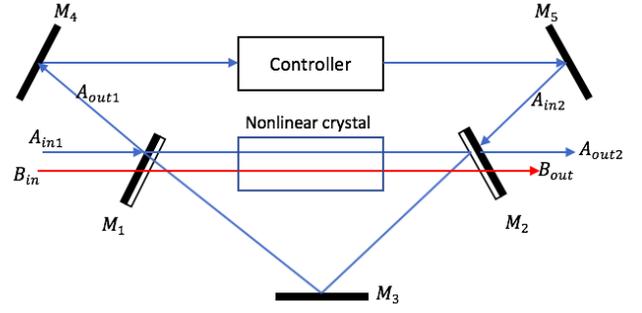}
\caption{OPO with a controller.}
\label{systemandcontrol}
\end{figure}

For the experimental system in \cite{serikawa2016creation}, the round-trip length is $45$ mm and the corresponding optical path length is $53$ mm (including the length of crystal). One of the mirrors with a partial-reflection coating has power transmissivity $T_1=14.6\% $ at $860$ nm. The other two mirrors have power transmissivity $T_2=0.02\%$ and $T_3=0$, respectively. We then calculate the decay rates by using $\kappa_i=\frac{T_i}{\tau}$ with $\tau$ being the cavity round trip time as $\kappa_{1}=0.8264$, $\kappa_{2}=0.0011$, and $\kappa=0.8275$. We here only consider the case that the system is acting as an amplifier. In this case $\chi\leq \kappa$. In the numerical example, we take three different values $\chi \in \{0.1\kappa, 0.2\kappa, 0.3\kappa\}$, which results in three modes for the system with
\begin{equation}
\label{eq27}
\begin{aligned}
A_1&=\begin{bmatrix}
-0.4551 & 0\\
0 & -0.3724
\end{bmatrix},\\ A_2&=\begin{bmatrix}
-0.4965 & 0\\
0 & -0.3310
\end{bmatrix}, \\A_3&=\begin{bmatrix}
-0.5379 & 0\\
0 &-0.2896
\end{bmatrix}.
\end{aligned}
\end{equation}
We first consider the case where the transition rate matrix is known as
$$\begin{bmatrix}
-0.02 & 0.01 & 0.01\\
0.01 & -0.01 & 0\\
0.01 & 0 & -0.01
\end{bmatrix}.$$

By solving the LMIs in \eqref{eq37} and \eqref{eq38}, we obtain the controller as
\begin{equation}
\label{eq28}
\begin{aligned}
\mathcal{A}_{1}&=\begin{bmatrix}
-1.7535 & 0\\
0 & -2.1226
\end{bmatrix} ,
\mathcal{B}_{1}=\begin{bmatrix}
1.2524 & 0\\
0 & 1.8944
\end{bmatrix},\\
\mathcal{C}_{1}&=\begin{bmatrix}
-0.0331& 0\\
0 & -0.0331
\end{bmatrix};
\end{aligned}
\end{equation}

\begin{equation}
\label{eq29}
\begin{aligned}
\mathcal{A}_{2}&=\begin{bmatrix}
-1.5796 & 0\\
0 & -2.2738
\end{bmatrix} ,
\mathcal{B}_{2}=\begin{bmatrix}
0.9713 & 0\\
0 & 2.2099
\end{bmatrix},\\
\mathcal{C}_{2}&=\begin{bmatrix}
-0.0331 & 0\\
0 & -0.0331
\end{bmatrix};
\end{aligned}
\end{equation}

\begin{equation}
\label{eq30}
\begin{aligned}
\mathcal{A}_{3}&=\begin{bmatrix}
-1.3992 & 0\\
0 & -2.4340
\end{bmatrix} ,
\mathcal{B}_{3}=\begin{bmatrix}
0.7024 & 0\\
0 & 2.5600
\end{bmatrix},\\
\mathcal{C}_{3}&=\begin{bmatrix}
-0.0331 & 0\\
0 & -0.0331
\end{bmatrix}.
\end{aligned}
\end{equation}
Here, $\{\mathcal{A}_i, \mathcal{B}_i, \mathcal{C}_i\}$ are the parameters of the $i$-th mode of the controller.

By using Proposition \ref{propos3}, it can be checked that the controller for each $i=1, 2, 3$ is not physically realizable without additional quantum noises. Hence, we use the algorithm in \cite{vuglar2016quantum} to construct the system parameters corresponding to the additional quantum noiese as $\tilde{\mathcal{B}}_{1}=\begin{bmatrix}\mathcal{B}_{1} & \mathcal{E}_{11} & \mathcal{E}_{12} \end{bmatrix}$, $\tilde{\mathcal{B}}_{2}=\begin{bmatrix}\mathcal{B}_{2} & \mathcal{E}_{21} & \mathcal{E}_{22} \end{bmatrix}$, and $\tilde{\mathcal{B}}_{3}=\begin{bmatrix}\mathcal{B}_{3} & \mathcal{E}_{31} & \mathcal{E}_{32} \end{bmatrix}$, where:
\begin{equation}
\label{eq11}
\begin{aligned}
\mathcal{E}_{11}&=\begin{bmatrix}
0.0331 & 0\\
0 & 0.0331
\end{bmatrix}, \mathcal{E}_{12}=\begin{bmatrix}
1.2258 & 0\\
0& 1.2258
\end{bmatrix},\\
\mathcal{E}_{21}&=\begin{bmatrix}
0.0331 & 0\\
0& 0.0331
\end{bmatrix}, \mathcal{E}_{22}=\begin{bmatrix}
1.3057 & 0\\
0 & 1.3057
\end{bmatrix},\\
\mathcal{E}_{31}&=\begin{bmatrix}
0.0331 &0\\
0& 0.0331
\end{bmatrix}, \mathcal{E}_{32}=\begin{bmatrix}
1.4262 & 0\\
0 & 1.4262
\end{bmatrix}.
\end{aligned}
\end{equation}
Here, these control parameters imply that the controller has three inputs, $y, \nu_1, \nu_2$, the corresponding input matrices are $\mathcal{B}_i$, $\mathcal{E}_{i1}$, and $\mathcal{E}_{i2}$, respectively.

We further present an implementation of the controller, which should switch between different modes according to the plant. The structure diagram of the controller is shown in Fig. \ref{controller}, where the left cavity with pump field $\epsilon_1$ is the static squeezer with the total decay rate $\kappa^\prime=\kappa^\prime_1$, while the right one is the general dynamical squeezer with the total decay rate as $\kappa=\kappa_1+\kappa_2+\kappa_3$.
\begin{figure}
\centering
\includegraphics[width=3.5in]{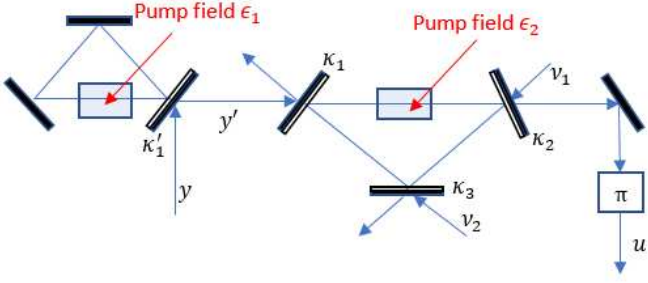}
\caption{Diagram for the controller composed of a static squeezer pumped by $\epsilon_1$, an OPO pumped by $\epsilon_2$, and a phase shifter $\pi$.}
\label{controller}
\end{figure}
According to the relation between input and output of the static squeezer in \eqref{eq65}, we first can obtain the input-output equation for the left static squeezer in Fig. \ref{controller}:
\begin{equation}
\label{eq60}
\begin{aligned}
&\begin{bmatrix}
y^{\prime}+{y^{\prime}}^{\dagger}\\
(y^{\prime}-{y^{\prime}}^{\dagger})/\mbox{i}
\end{bmatrix}\\
&=
\begin{bmatrix}
-1+\frac{\frac{\kappa^{\prime}}{2}\kappa^{\prime}-\chi \kappa^{\prime}}{(\frac{\kappa^{\prime}}{2})^2-{\chi^{\prime}}^2}& 0\\
0& -1+\frac{\frac{\kappa^{\prime}}{2}\kappa^{\prime}+\chi\kappa^{\prime}}{(\frac{\kappa^{\prime}}{2})^2-{\chi^{\prime}}^2}
\end{bmatrix}\cdot
\begin{bmatrix}
y+y^{\dagger}\\
(y-y{\dagger})/\mbox{i}
\end{bmatrix}.
\end{aligned}
\end{equation}

With \eqref{eq60}, we can write its dynamical and output equations for the system in Fig. \ref{controller} as follows
\begin{equation}
\label{eq42}
\begin{aligned}
&\left.d \begin{bmatrix}
\hat{a}(t)+\hat{a}^{\dagger}(t)\\
\left( \hat{a}(t)-\hat{a}^{\dagger}(t) \right)/\mbox{i}
\end{bmatrix}\middle/dt \right.\\
&= \begin{bmatrix}
-\frac{\kappa}{2}-\chi & 0\\
0 & -\frac{\kappa}{2}+\chi
\end{bmatrix}\begin{bmatrix}
\hat{a}(t)+\hat{a}^{\dagger}(t)\\
\left( \hat{a}(t)-\hat{a}^{\dagger}(t) \right)/\mbox{i}
\end{bmatrix}\\
&+\sqrt{\kappa_1}\begin{bmatrix}
-1+\frac{\frac{\kappa^{\prime}}{2}\kappa^{\prime}-\chi \kappa^{\prime}}{(\frac{\kappa^{\prime}}{2})^2-{\chi^{\prime}}^2}& 0\\
0& -1+\frac{\frac{\kappa^{\prime}}{2}\kappa^{\prime}+\chi\kappa^{\prime}}{(\frac{\kappa^{\prime}}{2})^2-{\chi^{\prime}}^2}
\end{bmatrix}\begin{bmatrix}
y(t)+y^{\dagger}(t)\\
(y(t)-y^{\dagger}(t))/\mbox{i}
\end{bmatrix}\\
&+\sqrt{\kappa_2}\mbox{I}\ d\begin{bmatrix}
\nu_1(t)+\nu_1^{\dagger}(t)\\
(\nu_1(t)-\nu_1^{\dagger}(t))/\mbox{i}
\end{bmatrix}+\sqrt{\kappa_3}\mbox{I}\begin{bmatrix}
\nu_2(t)+\nu_2^{\dagger}(t)\\
(\nu_2(t)-\nu_2^{\dagger}(t))/\mbox{i}
\end{bmatrix},\\
&\begin{bmatrix}
u(t)+u^{\dagger}(t)\\
\left(u(t)-u^{\dagger}(t)\right)/\mbox{i}
\end{bmatrix}=-\sqrt{\kappa_2}\mbox{I} \begin{bmatrix}
\hat{a}(t)+\hat{a}^{\dagger}(t)\\
\left(\hat{a}(t)-\hat{a}^{\dagger}(t)\right)/\mbox{i}
\end{bmatrix}\\
&+\mbox{I}\begin{bmatrix}
\nu_1(t)+\nu_1^{\dagger}(t)\\
\left(\nu_1(t)-\nu_1^{\dagger}(t)\right)/\mbox{i}
\end{bmatrix}.
\end{aligned}
\end{equation}
Here, $\hat{a}(t)$ represents the annihilation operator of the fundamental field inside the dynamical OPO; $\kappa^{\prime}=\kappa_1^{\prime}$ and $\chi^{\prime}$ are the total decay rate and the pumping dependent parameter of the static squeezer, respectively; while for the dynamical squeezer with pump filed $\epsilon_2$, $\kappa=\kappa_1+\kappa_2+\kappa_3$, and $\chi$ is pumping dependent parameter. We should note that $\kappa_1$, $\kappa_2$ and $\kappa_3$ need to be changed for different modes of the controller. This can be realized by tunable mirrors. Usually it can be achieved by making the mirror as a cavity itself and manipulating the decay rate by controlling the separation of the cavity \cite{bachor2004guide}. Also, we can change the parameters $\chi$ and $\chi^{\prime}$ by changing the pump field.

Comparing \eqref{eq28}-\eqref{eq30}, \eqref{eq11}, and \eqref{eq42}, we obtain
\begin{equation}
\label{eq66}
\begin{aligned}
\mathcal{A}_i&=\begin{bmatrix}-\frac{\kappa}{2}-\chi & 0\\
0& -\frac{\kappa}{2}+\chi\end{bmatrix},\\
\mathcal{B}_i&=\sqrt{\kappa_1}\begin{bmatrix}
-1+\frac{\frac{\kappa^{\prime}}{2}\kappa^{\prime}-\chi \kappa^{\prime}}{(\frac{\kappa^{\prime}}{2})^2-{\chi^{\prime}}^2}& 0\\
0& -1+\frac{\frac{\kappa^{\prime}}{2}\kappa^{\prime}+\chi\kappa^{\prime}}{(\frac{\kappa^{\prime}}{2})^2-{\chi^{\prime}}^2}
\end{bmatrix},\\
\mathcal{E}_{i1}&=\sqrt{\kappa_2}\mbox{I},\\
\mathcal{E}_{i2}&=\sqrt{\kappa_3}\mbox{I}.
\end{aligned}
\end{equation}
 According to the parameters for different modes, we can calculate the corresponding parameters as follows:
\begin{description}
\item [Mode 1]
$$
\begin{aligned}
\kappa&=3.8761, \chi=-0.1846, \\
\kappa_{1}&=2.3724, \kappa_{2}=0.0011, \kappa_{3}=1.5026,\\
\kappa_1^{\prime}&=10,\\
\chi^{\prime}&=0.6237;
\end{aligned}
$$
\item [Mode 2]
$$
\begin{aligned}
\kappa&=3.8534, \chi=-0.3471, \\
\kappa_{1}&=2.1475, \kappa_{2}=0.0011, \kappa_{3}=1.7046,\\
\kappa^{\prime}&=10,\\
\chi^{\prime}&=1.1953;
\end{aligned}
$$
\item [Mode 3]
$$
\begin{aligned}
\kappa&=3.8332, \chi=-0.5174, \\
\kappa_{1}&=1.7981, \kappa_{2}=0.0011, \kappa_{3}=2.0340,\\
\kappa^{\prime}&=10,\\
\chi^{\prime}&=1.7650.
\end{aligned}
$$
\end{description}

Here, we have used $\chi$ and $\kappa$ to represent the pumping coefficient and decay rate for the dynamical squeezer, and $\chi^{\prime}$ and $\kappa^{\prime}$ to represent the parameters for the static squeezer. In the static squeezer, we have assumed that only one mirror has a decay rate and the other two are ideal and fully reflective. The parameters of the squeezer also show that the total decay rate remains unchanged for the three modes of the controller and only the pumping field needs to be adjusted.

\begin{remark}
It should be noted that the controller designed above is mode-dependent, which means the controller should switch with the modes of the plant. Mode-dependent controllers are also often designed in classical systems since the mode-independent one usually does not have satisfactory performance. In classical cases, the modes of the plant are often assumed to be known to the controller, and we have also made this assumption in quantum cases here. For the dynamical squeezer in Fig. \ref{system}, it is possible to know the modes of the plant by measuring the output of the pump field, which is shown as $B_{out}$. Since the pump field is an intense laser and treated in a classical way, measuring the output does not destroy any quantum information. Furthermore, since we have assumed that the parameter $\chi$ in system equation \eqref{eq63} is real, we only consider fluctuations in the amplitude of the pump field. Therefore, we can measure changes in the amplitude of $B_{out}$ without a reference laser.
\end{remark}


%

\section{Conclusion}
\label{conclusion}
In this paper, we have extended the sufficient and necessary conditions of physical realization to the time-varying linear quantum systems described by QSDEs, which plays an important role in designing a coherent feedback controller. Then the $H^\infty$ control problem has been considered and related to Riccati differential equations. This makes it possible to design an $H^\infty$ controller by solving a group of LMIs. The physical realizability of the time-varying controller is then ensured by introducing additional noises and constructing the corresponding input matrices. For a dynamic squeezer used in quantum optical experiments, the fault signal of the pump laser has been recognized and this results in a Markovian jump linear quantum system. A coherent feedback controller has been designed to bound the effect of disturbance input on the output even when the plant suffers from a fault signal. The physical realizability of the controller has been ensured theoretically, and it also has been implemented by some basic optical components including squeezers, beamsplitters and phase shifters. This paper has assumed that the system has precisely known transit rate matrix and the designed controller is mode-dependent. In some practical applications, we may only know estimated or partial values of the transit rate matrix. This leads to the control problem for Markovian jump linear systems with partly known or uncertain rate matrices, which needs to be further considered. Furthermore, the time to measure the output of the pump field and to determine the mode of the plant sometimes may not be ignored. Therefore, future research could also include time-varying $H^\infty$ control for linear quantum systems with time delays.


%

\appendices
\section{}
Theorem 2.1 from \cite{R1}: For the system \eqref{eq1}, $[x_i(0), x_j(0)]=2\mbox{i}\Theta_{ij}$ implies $[x_i(t),x_j(t)]=2\mbox{i}\Theta_{ij}$ for all $t\geq 0$ if and only if
\begin{equation*}
\label{app1}
\mbox{i}A\Theta+\mbox{i}\Theta A^T+BT_{\tilde{\omega}}B^T=0.
\end{equation*}
\section{}
The following Lemma comes from \cite{R1}.
\begin{lemma}
\label{lemma1}
Consider a real matrix $X$ and corresponding operator valued quadratic form $x^T X x$ for the system \eqref{eq13}. The following statements are equivalent.
\begin{enumerate}
\item [i)] There exists a constant $\lambda \geq 0$ such that $\langle \rho, x^TX x \rangle\leq \lambda$ for all Gaussian states $\rho$.
\item [ii)] The matrix $X$ is negative semidefinite.
\end{enumerate}
\end{lemma}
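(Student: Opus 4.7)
The plan is to exploit the fact that for a Gaussian state $\rho$ all expectations of quadratic forms in $x$ are determined by the first and second moments, so $\langle \rho, x^T X x\rangle$ reduces to a finite-dimensional expression in the mean vector $m=\langle \rho, x\rangle$ and the (symmetric part of the) covariance matrix. Once the lemma is reformulated in these elementary terms, the equivalence becomes a statement about quadratic forms on $\mathbb{R}^n$ plus a trace pairing on positive semidefinite matrices.

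First I would write $\langle \rho, x_ix_j\rangle = m_im_j + c_{ij}$, where $c_{ij}=\langle \rho, (x_i-m_i)(x_j-m_j)\rangle$. Self-adjointness of the $x_i$ gives $c_{ij}^* = c_{ji}$, so $c = c_s + \mathrm{i}\Theta$ with $c_s$ real symmetric positive semidefinite (the classical covariance) and the antisymmetric part fixed by the CCR in \eqref{eq2}. Summing against $X$ and using the symmetry of $X$ (or, for a general real $X$, restricting to its symmetric part $X_s=\tfrac12(X+X^T)$, since the antisymmetric part contributes only the state-independent constant $\mathrm{i}\,\mathrm{Tr}(X_a\Theta)$), I obtain the key identity
\begin{equation*}
\langle \rho, x^T X x\rangle \;=\; m^T X_s m \;+\; \mathrm{Tr}(X_s c_s) \;+\; \text{const.}
\end{equation*}

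For the implication ii) $\Rightarrow$ i), if $X\le 0$ then $m^T X_s m\le 0$ for every $m\in\mathbb{R}^n$, and $\mathrm{Tr}(X_s c_s)\le 0$ because the trace pairing of a negative semidefinite matrix with a positive semidefinite one is nonpositive; hence $\langle \rho, x^T X x\rangle$ is bounded above by a single constant independent of $\rho$. For the converse i) $\Rightarrow$ ii), I would argue by contrapositive: suppose $X$ is not negative semidefinite, so there exists $v\in\mathbb{R}^n$ with $v^T X_s v>0$. Take a one-parameter family of Gaussian states $\rho_t$ with mean $m_t=tv$ and covariance fixed at, say, the vacuum value (which is an admissible Gaussian covariance satisfying $c_s+\mathrm{i}\Theta\ge 0$). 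Then $\langle \rho_t, x^T X x\rangle = t^2 v^T X_s v + O(1)\to +\infty$ as $t\to\infty$, contradicting the uniform bound.

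The only real obstacle is the bookkeeping caused by noncommutativity: $x^T X x$ is not self-adjoint in general, so one must be careful that the inequality in i) is interpreted in the sense of its real (Hermitian) part, and that the decomposition $c=c_s+\mathrm{i}\Theta$ is handled correctly when $X$ is allowed to be nonsymmetric. Once this is formalized (either by passing to $X_s$ or by explicitly writing $x^T X x = \tfrac12 x^T(X+X^T) x + \tfrac12 \sum_{i<j}(X_{ij}-X_{ji})[x_i,x_j]$ and noting that the commutator term is the scalar $\mathrm{i}\Theta$-contribution), the remaining work is the classical fact that a real quadratic form is bounded above on $\mathbb{R}^n$ iff its matrix is negative semidefinite.
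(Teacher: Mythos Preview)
Your argument is correct. Note, however, that the paper does not supply its own proof of this lemma: in Appendix~B it is stated verbatim with the attribution ``The following Lemma comes from~[R1]'' and no further justification. So there is nothing in the present paper to compare against beyond the citation.

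On the substance: the reduction you outline---expanding $\langle \rho, x^T X x\rangle$ in terms of the mean $m$ and the symmetrized covariance $c_s$, isolating the state-independent commutator contribution, and then using the freedom to displace the mean arbitrarily to force $X_s\le 0$---is precisely the standard route and matches what the cited reference does. Two small points you should make explicit if you write this up: (a) the existence, for every $m\in\mathbb{R}^n$, of a Gaussian state with that mean (e.g.\ the displaced vacuum), which underpins the contrapositive step; and (b) that $c_s\ge 0$ follows from the uncertainty relation $c_s+\mathrm{i}\Theta\ge 0$ by evaluating on real vectors, so that $\mathrm{Tr}(X_s c_s)\le 0$ is indeed justified when $X_s\le 0$. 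With those filled in, the proof is complete.
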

Here $\langle \rho, \cdot \rangle$ represents the expectation with respect to the Gaussian state $\rho$.



\ifCLASSOPTIONcaptionsoff
  \newpage
\fi



\bibliographystyle{IEEEtran}
\bibliography{IEEEabrv,references}
\end{document}